\begin{document}

\begin{frontmatter}

\title{Tight lower bounds on the number of bicliques in false-twin-free graphs}

\vspace*{-10mm}

\author{Marina Groshaus\thanksref{myemail}}
\vspace*{-3.5mm}
\address{Departamento de Computaci\'on\\ Universidad de Buenos Aires\\ Buenos Aires, Argentina}

\vspace*{-2mm}

\author{Leandro Montero\thanksref{coemail}}
\vspace*{-3.5mm}
\address{Laboratoire de Recherche en Informatique\\ Universit\'e Paris-Sud\\ Orsay CEDEX, France}

\thanks[myemail]{Partially supported by UBACyT grant 20020100100754, PICT ANPCyT grant 2010-1970,
  CONICET PIP grant 11220100100310. Partially supported by Math-Amsud project 14 Math 06. Email: \href{mailto:groshaus@dc.uba.ar}{\texttt{\normalshape groshaus@dc.uba.ar}}} 
\thanks[coemail]{Email: \href{mailto:lmontero@lri.fr}{\texttt{\normalshape lmontero@lri.fr}}}

\vspace*{-5mm}

\begin{abstract}
A \emph{biclique} is a maximal bipartite complete induced subgraph of $G$. Bicliques have been studied in the last years motivated by the large 
number of applications. In particular, enumeration of the maximal bicliques has been of interest in data analysis.
Associated with this issue,  bounds on the maximum number of bicliques were given.
In this paper we study bounds on the minimun number of bicliques of a graph. Since adding false-twin vertices to $G$ does not change the number 
of bicliques, we restrict to false-twin-free graphs. We give a tight lower bound on the minimum number bicliques for a subclass of 
$\{C_4$,false-twin$\}$-free graphs and for the class of $\{K_3$,false-twin$\}$-free graphs. Finally we discuss the problem for general graphs. 
\end{abstract}

\begin{keyword}
Bicliques, False-twin-free graphs, Lower bounds
\end{keyword}

\end{frontmatter}

\vspace*{-2mm}

\section{Introduction}
Interest in the study of bicliques has increased recently motivated by the wide scope of 
applications~\cite{Amilhastre,Ganter,Kumar,Lehmann,Schweiger01072011,Xiang}. For example, bicliques appear in automata and language 
theory, graph compression, artificial intelligence, biology and data mining.  In particular, bicliques are studied in the contexts of: biclustering 
microarray data~\cite{cheng2000biclustering,tanay2002discovering,wang2002clustering}, optimizing phylogenetic tree 
reconstruction~\cite{Sanderson2003}, identifying common gene-set associations~\cite{Chesler2005}, integrating diverse 
functional genomics data~\cite{Baker2009}, analyzing proteome-transcriptome relationships~\cite{kirova2006systems} and discovering 
patterns in epidemiological research~\cite{mushlin2007graph}. In genetics, bicliques represent subsets of genes and subsets of 
properties~\cite{cheng2000biclustering,liu2003op,tanay2002discovering,wang2002clustering}. Also, bicliques appear in the reconstruction of 
phylogenetic trees~\cite{Sanderson2003}.

In all these applications, bicliques represent the relation between different data types. Due to the big size of data used, 
one of the main problems consists of enumerating all the bicliques. Recall that the number of bicliques in a graph can be 
exponential~\cite{PrisnerC2000}.
Therefore, algorithms to list all bicliques of a graph have been of particular attention. Recall that the definition of bicliques can 
be different for each case. Some authors consider bicliques not induced, others, with bounded size of bipartition, etc.
See for example~\cite{Alexe,Damaschke2014317,Dias,Eppstein1994,Gely,Makino2004,Mukherjee,mushlin2007graph,Sanderson2003,Raj,zaki1998theoretical,Zhang2014BIO}.

In the context of induced bicliques, some bounds were given. Prisner studied various aspects~\cite{PrisnerC2000}. He showed that the 
maximum number of bicliques in a bipartite graph on $n$ vertices is $2^{\frac{n}{2}}$, that is attained by the $Cocktail-party$ graphs.
Also, he gave a lower bound of $3^{\frac {n}{3}}$ and an upper bound of $1.6181^n$ on the maximum number of 
bicliques for general graphs. In~\cite{Kratsch}, they improved this result giving a better upper bound of 
$\frac{1}{3^{1/3}-1}3^{\frac {n}{3}}$ on the number of bicliques of a graph.

In this work we focus on lower bounds on the minimum number of bicliques of a graph where by definition, bicliques are induced.  Despite in general it seems  a ``trivial'' question, 
since any bipartite complete graph $K_{n,m}$ has just one biclique, it is not considered the fact that all vertices in each 
partition are false-twins. Since adding false-twin vertices to $G$ does not change the number of bicliques~\cite{marinayo}, the 
bipartite complete graph $K_{n,m}$ can be thought as a biclique ``equivalent'' to $K_{1,1}$. Recall that many applications consider 
the bicliques as a whole group of items (represented in one side of the bipartition) having a common non-empty subset of characteristics 
(represented in the other side). So, it is natural to think that if there are groups of ``twin objects'' with the same characteristics, 
we could only maintain one representative object for each. Following this idea we introduce the problem of finding bounds for the minimum 
number of bicliques in a graph with no false-twin vertices.  We recall that deleting false-twin vertices can be done in linear 
time using the modular decomposition~\cite{Habib}. 

Even though the bounds are polynomial, this approach can help to develop algorithms for listing the bicliques, not only because of 
the use itself of the bounds but also because of the ideas behind the proofs. Also these bounds can be used for heuristical algorithms. 

Throughout this paper we discuss the general case and give bounds for the minimum number of bicliques of graphs in a subclass of 
$\{C_4$, false-twin$\}$-free graph and the class of $\{K_3$,false-twin$\}$-free graph. We prove that any graph in these classes on 
$n\geq 3$ and $n\geq 4$ vertices respectively, has at least $\lceil \frac{n}{2} \rceil$ bicliques.
We show that these bounds are tight and finally we prove that the bound $\lceil \frac{n}{2} \rceil$ does not hold for general 
false-twin-free graphs. This work is the full improved version of a previous extended abstract~\cite{LAGOSENDM}.

This work is organized as follows. 
In Section $2$ the notation is given. In Section $3$ and Section $4$ we study bicliques in false-twin-free graphs.
We prove the lower bound for the minimum number of bicliques for a subclass of $\{C_4$,false-twin$\}$-free graphs and for the class of 
$\{K_3$,false-twin$\}$-free graphs. Finally, in Section $5$ we discuss the problem for general graphs. 

\section{Preliminaries}

Along the paper we restrict to undirected simple graphs. Let $G=(V,E)$ be a graph with vertex set $V$ and edge set $E$, and 
let $n=|V|$ and $m=|E|$.  A \emph{subgraph} $G'$ of $G$ is a graph $G'=(V',E')$ where $V'\subseteq V$ and 
$E'\subseteq E$. A subgraph $G'=(V',E')$ of $G$ is \emph{induced} when for every pair of vertices $v,w \in G'$, $vw \in E'$ 
if and only if $vw \in E$. A graph $G=(V,E)$ is \emph{bipartite} when $V= U \cup W$, $U \cap W = \emptyset$, and $E \subseteq U \times W$. Say that $G$ is a 
\emph{complete graph} when every possible edge belongs to $E$ and say that $G$ is \emph{bipartite complete} when 
$E=U\times W$. A complete graph of $n$ vertices is denoted $K_{n}$ and
a bipartite complete graph on $n$ and $n'$ vertices in each partition respectively, is denoted $K_{n,n'}$.
A graph $G$ is \emph{$H$-free} if it does not contain $H$ as an induced subgraph.
An \emph{independent set} of a graph is a set of vertices, no two of which are adjacent.
A \emph{clique} of $G$ is a maximal complete induced subgraph, while a \emph{biclique} is a maximal bipartite complete induced subgraph of $G$. 
The \emph{open neighborhood} of a vertex $v \in V(G)$, denoted by $N(v)$, is the set of vertices adjacent to $v$ while the \emph{closed neighborhood} of 
$v$, denoted by $N[v]$, is $N(v) \cup \{v\}$. The \emph{degree} of a vertex $v$, denoted by $d(v)$, is defined as $d(v) = |N(v)|$.
The maximum degree among all vertices of $G$ is denoted by $\Delta(G)$.
Two vertices $u$, $v$ are \emph{false-twins} if $N(u)=N(v)$ and \emph{true-twins} if $N[u]=N[v]$.  A vertex $v$ is \emph{simplicial} if 
$\{v\} \cup N(v)$ is a clique. A \emph{diamond} is the graph $K_4$ minus an edge. A \emph{cycle} on $k$ vertices, denoted by $C_k$, is a set of 
vertices $v_{1}v_{2}\ldots v_{k} \in V(G)$ such that $v_{i} \neq v_{j}$ for all $1 \leq i \neq j \leq k$, $v_{i}$ is adjacent to $v_{i+1}$ 
and $v_k$ is adjacent to $v_1$ for all $1 \leq i \leq k-1$.  We assume that all the graphs of this paper are connected.

As mentioned before, we want to keep only the representative vertices for each group of false-twins, in order to, for example reduce the size of the 
graph and bicliques considered. Formalizing this idea, we present the following definition as it is in~\cite{marinayo}. Consider all maximal sets of 
false-twin vertices $Z_1,\ldots,Z_k$ and let $\{z_{1},z_{2},\ldots,z_{k}\}$ be the set of \textit{representative vertices} such that $z_i\in Z_i$.  
The graph obtained by the deletion of all vertices of $Z_i\setminus \{z_i\}$, for $i=1,\ldots,k$, is denoted \textit{Tw(G)}.  
We mention that being false-twin can be thought as an equivalence relation that separates the graphs in equivalent classes.

\section{Lower bounds in a subclass of $\{C_4$,false-twin$\}$-free graphs}

We start defining a subclass of $\{C_4$,false-twin$\}$-free graphs. Observe that if $G$ has no induced $C_4$, then all its bicliques 
are isomorphic to $K_{1,r}$, $r\geq 1$. We call these kinds of bicliques \emph{$v$-star}, where $v$ is the vertex in the partition of size one 
and is called the \emph{center} of the star.

A vertex $x$ is \emph{alone} if $x$ is a simplicial vertex and $N(x)$ has no simplicial vertices.
Let $G$ be a graph and let $\mathcal{A}$ be its set alone vertices. An \emph{assignment} for $\mathcal{A}$ is an association for every 
 alone vertex: each alone vertex $x$ is associated to a vertex $v$ and an edge $vv'$, where $v$ and $v'$ belong to $N(x)$, such that either 
 $v$ is not dominated by $v'$ or $v$ and $v'$ are true-twins.  
We say that $\mathcal{A}$ has a \emph{good assignment} if the assignment verifies the following:
If two different alone vertices in $\mathcal{A}$ have the same associated edge, then its endpoints are not true-twins and the 
associated vertices of the alone vertices are different. 
This implies that the edges incident to true-twins or to two vertices such that one is dominated by the other, can only be assigned to at 
most one alone vertex. See Figures~\ref{claselocaIN} and~\ref{claselocaOUT} for examples.

\FloatBarrier
\begin{figure}[ht!]
	\centering
	\includegraphics[scale=.5]{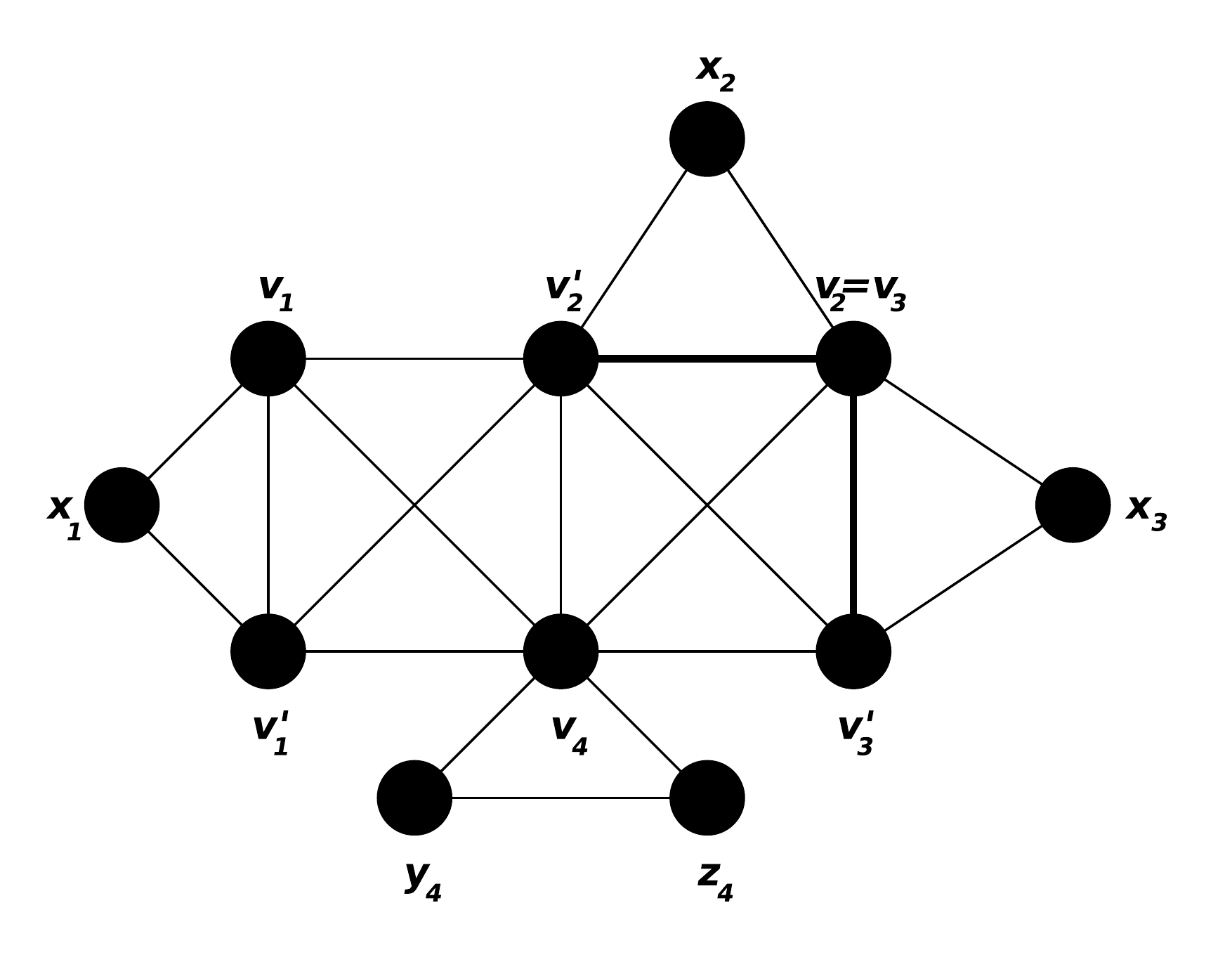}
	\caption{$\{C_4$,false-twin$\}$-free graph that has a good assigment. Vertex $x_1$ has the vertex $v_1$ and an edge with true-twins as 
	endpoints associated to it ($v_1v_1'$). Vertices $x_2$ and $x_3$ have the same vertex ($v_2=v_3$) associated to them but the 
	associated edges are different,	$v_2v_2'$, $v_3v_3'$ respectively.}
	\label{claselocaIN}
	\includegraphics[scale=.5]{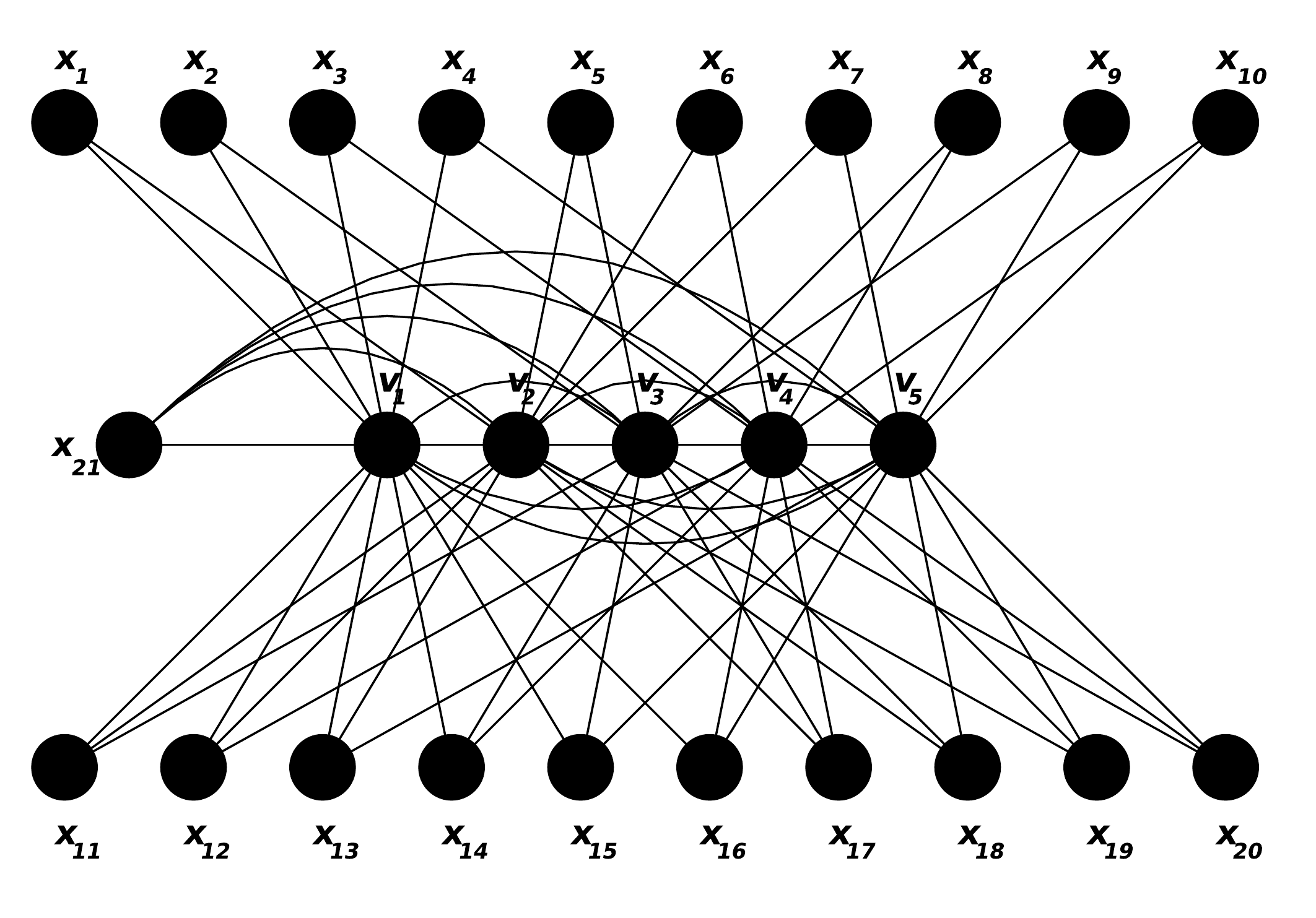}
	\caption{$\{C_4$,false-twin$\}$-free graph that has not a good assigment. Vertices $x_1,\ldots,x_{21}$  are alone vertices 
	and $v_1,\ldots,v_5$ are non-simplicial vertices. Therefore in any assigment at least two alone vertices will be associated
	to the same vertex and edge as each of the $10$ edges in the clique $v_1,\ldots,v_5$ could be used twice, obtaining at most $20$ 
	possible assigments to associate to $21$ vertices.}
	\label{claselocaOUT}
\end{figure}
\FloatBarrier

We study the subclass of $\{C_4$,false-twin$\}$-free graphs such that the set of alone vertices $\mathcal{A}$ has a good assignment. 
Observe that $\{C_4$,diamond,false-twin$\}$-free graphs are included in this class.   

We need the following lemma.

\begin{lemma}\label{singrado1}

Let $G$ be a $\{C_4$,false-twin$\}$-free graph on $n\geq 3$ vertices, without vertices of degree one, 
such that the set $\mathcal{A}$ has a good assignment. Then $G$ has at least $n$ bicliques.
\end{lemma}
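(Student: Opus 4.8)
The plan is to exhibit $n$ pairwise distinct bicliques by charging each vertex of $G$ to one of them. Because $G$ is $C_4$-free every biclique is a $v$-star $K_{1,r}$; for $r\ge 2$ the center is unique and, since a vertex adjacent to two non-adjacent leaves would close an induced $C_4$ with the center, such a star is a biclique exactly when its leaves form a maximal independent set of $N(v)$ of size at least two, whereas a single edge $uv$ is a biclique precisely when $u$ and $v$ are true-twins. I would first prove two structural facts. For adjacent simplicial $v,w$, every neighbour of $v$ lies in the clique $N[v]$ and is therefore adjacent to $w$, so $N[v]=N[w]$ (by symmetry); thus the true-twin relation is an equivalence whose classes are cliques that are entirely simplicial or entirely non-simplicial, whence the alone vertices are exactly the simplicial vertices forming a singleton class and every other simplicial vertex lies in an all-simplicial class of size at least two. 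I would also record the non-maximality fact that if $c$ is non-simplicial and $u\in N(c)$ is simplicial, then $\{u\}$ is not a maximal independent set of $N(c)$ (otherwise $N(c)\subseteq N[u]$ would be a clique, forcing $c$ to be simplicial), so $u$ lies in a $c$-star of size at least two.

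Next I would set up the charge in three parts. Each non-simplicial vertex $v$ receives one $v$-star (it exists since $N(v)$ is not a clique), and these are distinct because their centers are. Each all-simplicial true-twin class $Q$ of size $q\ge 2$ receives its $\binom{q}{2}$ internal edges, which are bicliques by the one-edge characterization; this pays for all $q$ vertices when $q\ge 3$ and for all but one when $q=2$. The missing vertex of a size-two class $\{u,v\}$ is charged to a surplus star: as $u$ has a neighbour $c\neq v$, which is necessarily non-simplicial and, since $N[u]=N[v]$, also a neighbour of $v$, the non-maximality fact puts $u$ in a $c$-star on a maximal independent set $I$, and replacing $u$ by its true-twin $v$ yields a genuinely different $c$-star, the surplus biclique. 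Finally each alone vertex $x$ is charged through its good assignment: in the true-twin case of the associated edge $vv'$ it is sent to the edge-biclique $vv'$, and in the ``$v$ not dominated by $v'$'' case to a $v$-star singled out by a neighbour of $v$ lying outside $N[v']$, which exists exactly because $v$ is not dominated by $v'$.

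The whole difficulty is the pairwise distinctness of these images, and this is precisely what the good assignment secures. Its clauses force two alone vertices sharing an associated edge to have different associated vertices, hence different star-centers, and forbid reusing an edge incident to a true-twin pair or to a dominated pair more than once. Moreover, since all neighbours of an alone vertex are non-simplicial, the true-twin edge assigned to an alone vertex joins two non-simplicial vertices and is therefore never one of the internal class edges; so the edge-bicliques never collide across the three parts, and the star- and edge-families are separated by size. The step I expect to be the main obstacle is reconciling the two sources of extra center-stars — the surplus swaps used for the size-two classes and the stars produced by the ``not dominated'' assignments — since both live at non-simplicial centers; the resolution is to choose, at each center $c$, the maximal independent sets of $N(c)$ coherently with the true-twin rotation, so that every true-twin class meeting $N(c)$ contributes its own block of mutually distinct stars. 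Once disjointness is established, the mandatory center-stars, the internal class edges, the size-two surplus stars and the assignment images form $n$ distinct bicliques, proving the lemma.
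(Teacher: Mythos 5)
Your structural groundwork is sound and in fact coincides with the paper's setup: in a $C_4$-free graph every biclique is a star whose leaves form a maximal independent set of the center's neighborhood, a single edge is a biclique exactly when its endpoints are true-twins, adjacent simplicial vertices are true-twins, and your three-part charge (center-stars for non-simplicial vertices, internal edges for the simplicial twin classes, a surplus star for the second vertex of each size-two class, the good assignment for alone vertices) is the same decomposition the paper uses. The existence arguments you give, the twin-swap $I \mapsto (I\setminus\{u\})\cup\{v\}$ and the leaf outside $N[v']$ in the non-dominated case, are also correct.

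The gap is exactly the step you flag and then wave off: pairwise distinctness of all the stars sharing a non-simplicial center $c$. At one center you may simultaneously have (i) the star charged to $c$ itself, (ii) one surplus star per size-two simplicial class inside $N(c)$, and (iii) one assignment star per alone vertex assigned to $c$, and nothing in your construction prevents collisions among these. Concretely, take two triangles $\{u,v,c\}$ and $\{a,b,c\}$ glued at $c$: both size-two classes $\{u,v\}$ and $\{a,b\}$ use $c$; if you charge $c$ to the star $\{c,v,a\}$ and then form the surplus star for $\{u,v\}$ by swapping $u\to v$ inside $I=\{u,a\}$, you get $\{c,v,a\}$ again, so five vertices receive only four bicliques. ``Choose the maximal independent sets coherently with the true-twin rotation'' names this problem rather than solving it, and it says nothing about collisions between surplus stars and alone-assignment stars, nor between either of those and the center's own star. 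The paper closes the hole with an explicit device: every extra vertex (the vertex $z_i$ of a size-two class, or the target of an alone vertex) is paired with a ``label-$1$'' partner adjacent to it ($y_i$, respectively the alone vertex $x_i$ itself); the center's star is required to contain \emph{all} label-$1$ endpoints (possible because they are simplicial vertices from distinct classes, hence pairwise non-adjacent), which automatically excludes every label-$2$ endpoint; and each extra vertex's star is required to contain its own label-$2$ edge together with all the remaining label-$1$ edges. Distinctness then follows by counting label-$2$ edges (zero for the centers' stars; exactly one, and pairwise different ones, for the others), and existence of each prescribed star uses that whenever a label-$1$ endpoint is blocked from an independent set, its label-$2$ partner is blocked as well (by true-twinness, respectively simpliciality, of the label-$1$ endpoint). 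Without this coordination, or an equivalent mechanism, your proof is incomplete at its decisive step.
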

\begin{proof}
Observe that if $G=K_n$, then the result holds as each edge is a biclique. Otherwise, we will assign each vertex to a different 
biclique. We will first give labels to some specific edges and associate them to some vertices. Also we will assign some vertices to some 
bicliques.

Let $C_1,C_2,\ldots,C_{\ell}$, $\ell \geq 1$, be the maximal sets of pairwise adjacent simplicial vertices in $G$. 
Observe that each $C_i$ induces a complete subgraph. 

Consider first the sets $C_i$ such that $|C_i|\geq 3$. For each of these sets choose any non-simplicial vertex $v$ adjacent to two 
vertices in $C_i$, say $w_1,w_2$. Clearly $v$ exists as $G \neq K_n$. Label the edges $vw_1, vw_2$ with labels $1$, $2$ respectively. 
This pair of labelled edges are associated to $v$. 
Now as each of the $\frac{|C_i|(|C_i|-1)}{2}$ edges of $C_i$ is a biclique in $G$ and $\frac{|C_i|(|C_i|-1)}{2} \geq |C_i|$, 
for $|C_i|\geq 3$, we assign each of these vertices to a different biclique (edge) of the complete graph $C_i$. 

Next consider the sets $C_i$ such that $|C_i|=2$. Assume $C_i=\{y_i, z_i\}$. For each set, choose any non-simplicial 
vertex $v$ adjacent to $y_i$ and $z_i$. Consider the pair of edges $vy_i, vz_i$ and assign labels $1$ and $2$ respectively.  
This pair of labelled edges are associated to $v$. Assign vertex $y_i$ to the biclique $y_iz_i$.

Finally, consider the sets $C_i$ such that $|C_i|=1$. Assume $C_i=\{x_i\}$. Recall that $x_i$ is an alone 
vertex. By hypothesis, $\mathcal{A}$ has a good assignment. For each $i$, consider the vertex $v$ and the edge 
$vv'$ associated to $x_i$ in the good assignment. 
If $v$ and $v'$ are true-twins, $vv'$ is a biclique, and we assign vertex $x_i$ to this biclique. 
Observe that since $\mathcal{A}$ has a good assigment, no two vertices are associated to the same edge (biclique). 
Otherwise, label the edge $vx_i$ with label $1$, and the edge 
$vv'$ with label $2$. This pair of labelled edges are associated to $v$. 

Observe that each pair of edges labelled with $1$ and $2$ contain a common vertex and are also incident to two adjacent vertices. 
See Figure~\ref{claselocalabel}.

Now, we assign the remaining vertices to different bicliques. 

\begin{figure}[ht!]
	\centering
	\includegraphics[scale=.5]{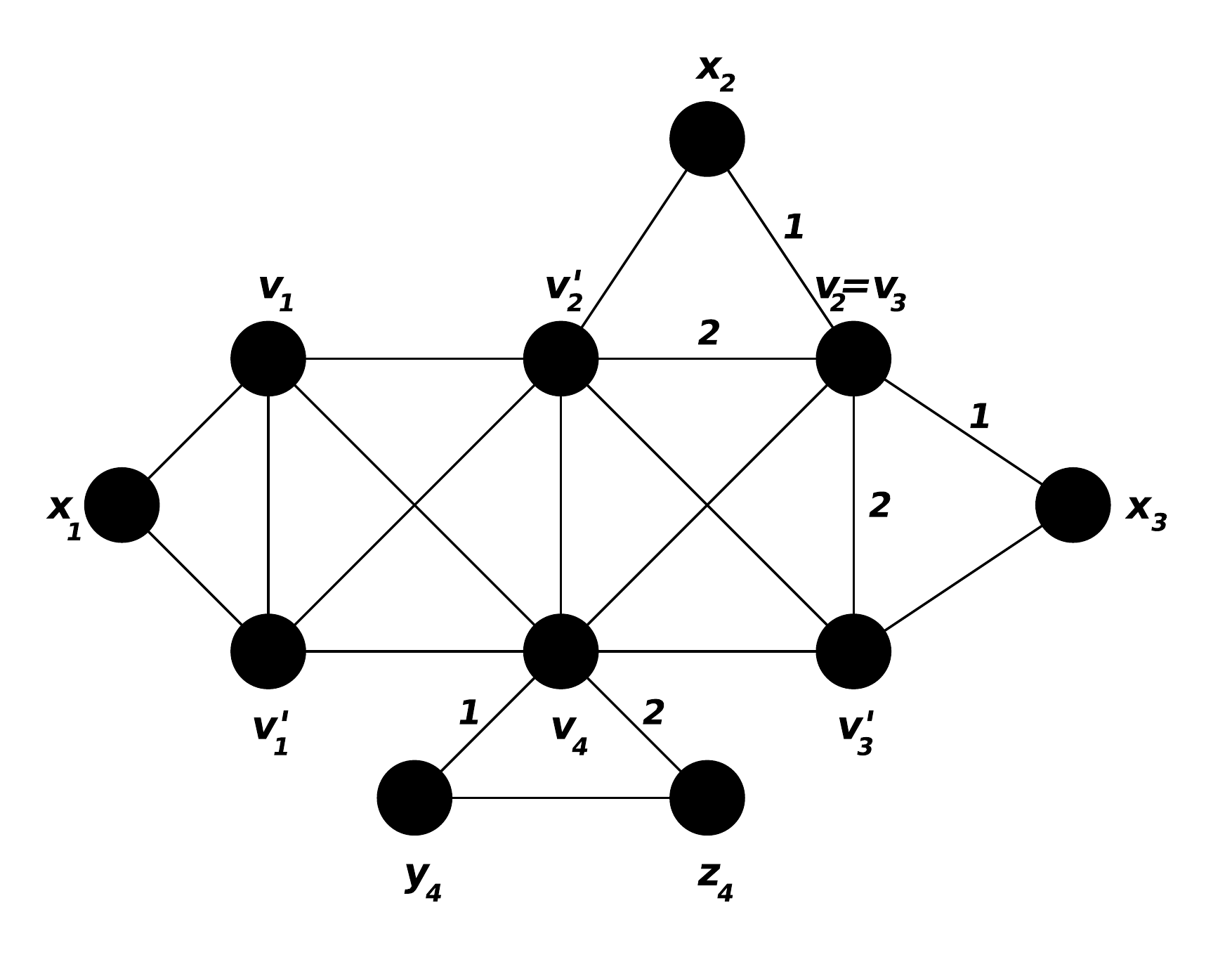}
	\caption{Same graph as Figure~\ref{claselocaIN} with labelled edges.}
	\label{claselocalabel}
\end{figure}

First consider non-simplicial vertices. 
Let $v$ be a non-simplicial vertex. Observe that each $v$ is the center of star biclique. 
Therefore if $v$ has no labelled edges associated to it, then we assign $v$ to any $v$-star biclique.
Otherwise, consider any star biclique centered in $v$ that contains all the associated edges with label $1$. 
Recall that all these edges labelled with $1$ are incident to $v$ and to simplicial vertices which are all independent. 
Also, observe that any star containing all those edges labelled with $1$ incident to $v$ does not contain a labelled edge associated to 
$v$ with label $2$. So such a vertex $v$ is associated with a star biclique centered at $v$, with all associated edges labelled $1$.

Next, for each set of simplicial vertices $C_i=\{y_i, z_i\}$ with edges $vy_i, vz_i$ labelled with $1$, $2$ respectively, we will assign $z_i$ to a biclique. Consider any  
$v$-star biclique that contains the edge $vz_i$ with label $2$ and all the edges associated to $v$ labelled with $1$ 
(different from $vy_i$), if any. Observe that such a $v$-star biclique containing the edge $vz_i$ always exists since, if $v$ has no other 
labelled edges associated to it, as it is non-simplicial, then it must exist a vertex adjacent to $v$ not adjacent to $z_i$.
Assign vertex $z_i$ to this biclique.

Finally, consider each $C_i=\{x_i\}$. Recall that $x_i$ is an alone vertex. Let $v$ and $vv'$ be its 
corresponding vertex and edge in the assignment. Vertex $v$ is not dominated by $v'$ since the case when they are true-twins was solved before. 
Consider any $v$-star biclique that contains the 
labelled edge $vv'$ with label $2$ associated to $v$ and no other labelled edge associated to $v$ with label $2$. 
Observe that such a star always exists since $v$ is not dominated by $v'$ and if a labelled edge with label $1$ associated to $v$ cannot be in the 
star, neither can be the corresponding labelled edge with label $2$ of the same pair.  
We assign vertex $x_i$ to this biclique.

We have assigned every vertex of $G$ to a biclique of $G$. Observe that all these bicliques are different:
each non-simplicial vertex $v$ is assigned to a $v$-star biclique that contains no labelled 
edge associated to $v$ with label $2$. Each simplicial vertex is assigned to a biclique that is either an edge incident to two simplicial 
vertices, an edge incident to two non-simplicial true-twins vertices, or a star centered in a 
non-simplicial vertex that contains exactly one associated labelled edge with 
label $2$.  We conclude that $G$ contains at least $n$ bicliques. 
\end{proof}

\begin{theorem}\label{bicsNoC4general}
Let $G$ be a $\{C_4$,false-twin$\}$-free graph on $n\geq 3$ vertices, such that $G$ has at most $k$ vertices of degree one and the set 
$\mathcal{A}$ of alone vertices has a good assignment. 
Then $G$ has at least $n-k$ bicliques.
\end{theorem}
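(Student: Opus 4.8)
The plan is to reduce the theorem to Lemma~\ref{singrado1} by re-running its assignment argument, the only new ingredient being the presence of degree-one vertices. First I would record their structure. If $x$ has degree one with unique neighbour $p$, then $x$ is simplicial and, since $G$ is connected with $n\geq 3$, the vertex $p$ is non-simplicial; hence $x$ is an \emph{alone} vertex whose simplicial class is the singleton $\{x\}$. Moreover, because $G$ is false-twin-free, two distinct degree-one vertices cannot share a neighbour, so every non-simplicial vertex carries at most one pendant. The key observation is that an assignment associates to every alone vertex an edge $vv'$ with both endpoints in $N(x)$; since $|N(x)|=1$ for a degree-one vertex, these are exactly the alone vertices that admit no assignment. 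Thus the good-assignment hypothesis of the theorem is really a condition on the alone vertices of degree at least two (whose neighbourhoods are cliques of size at least two, as they are simplicial), and the degree-one vertices are precisely the ones we will set aside.

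Second, I would carry out the labelling and the vertex-to-biclique assignment exactly as in the proof of Lemma~\ref{singrado1} for every vertex \emph{except} the degree-one ones. As there, each non-simplicial vertex is assigned to a star biclique centred at itself, and each assigned simplicial vertex is assigned either to an edge biclique incident to two simplicial vertices, to an edge biclique incident to two non-simplicial true-twins, or to a star centred at a non-simplicial vertex carrying exactly one label-$2$ edge. Since at most $k$ vertices have degree one, this assigns at least $n-k$ vertices, and the distinctness argument of Lemma~\ref{singrado1} applies unchanged: a non-simplicial vertex is non-simplicial precisely because it has two non-adjacent neighbours, so its star has at least two leaves and hence a well-defined centre, which gives pairwise distinct bicliques among them; the assigned simplicial vertices are separated from these and from each other exactly as before. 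This already yields the claimed bound of at least $n-k$ bicliques.

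The step I expect to be the main obstacle is a genuine \emph{non-interference} check: that discarding the degree-one vertices does not disturb any part of the Lemma's argument. Here I would use that a degree-one vertex $x$ with neighbour $p$ can appear in a biclique only as a leaf of a star centred at $p$; it is never the centre of an assigned star, and, having no assignment, it is never an endpoint of a labelled edge used by the good assignment. Consequently, omitting $x$ neither destroys any star biclique required by the assignment (the relevant stars centred at $p$ and elsewhere still exist, since $x$ is merely an optional extra leaf that is always addable) nor can it create a collision between two assigned vertices, because the biclique attached to each remaining vertex is defined exactly as in Lemma~\ref{singrado1} and two stars centred at distinct vertices never coincide. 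Once this verification is in place, every assigned vertex carries its own biclique, and we conclude that $G$ has at least $n-k$ bicliques.
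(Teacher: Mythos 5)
Your proof is correct at the same level of rigour as the paper's own arguments, but it takes a genuinely different route. The paper treats Lemma~\ref{singrado1} as a black box: for each degree-one vertex $v_i$ with neighbour $x_i$ it adds two new vertices $u_i,w_i$ so that $\{v_i,u_i,w_i\}$ induces a $K_3$, notes that the augmented graph $G'$ is still $\{C_4$,false-twin$\}$-free, has $n+2k$ vertices, none of degree one, and still admits a good assignment, applies the Lemma to get at least $n+2k$ bicliques of $G'$, and finally checks that exactly $3k$ of these (namely $\{u_i,w_i\}$, $\{v_i,u_i,x_i\}$ and $\{v_i,w_i,x_i\}$) are not bicliques of $G$, yielding $n+2k-3k=n-k$. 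You instead re-open the Lemma's proof and run it on $G$ itself: you identify the pendant vertices as exactly the alone vertices that can never receive an assignment, observe that they are never endpoints of labelled edges nor centres of assigned stars (only extra leaves), and conclude that the Lemma's labelling and assignment machinery places the remaining at least $n-k$ vertices into pairwise distinct bicliques. Both routes are sound, and both rest on the same charitable reading of the hypothesis, namely that the good assignment concerns only alone vertices of degree at least two, since a pendant vertex admits no assignment at all and otherwise the hypothesis would be unsatisfiable as soon as $k\geq 1$; you make this explicit while the paper leaves it implicit. What each approach buys: the paper's augmentation keeps the Lemma modular and needs no non-interference analysis, but it must verify that $G'$ inherits every hypothesis and that the biclique count changes by exactly $3k$ (in particular that every biclique of $G$ remains maximal in $G'$, a point the paper passes over with ``clearly''); your adaptation avoids the augmentation and the counting entirely, and in fact shows the slightly stronger statement that every vertex of degree at least two receives its own biclique, at the price of re-verifying the Lemma's internal steps in the presence of pendants, which you do. One small point to tighten: ``two stars centred at distinct vertices never coincide'' needs the assigned stars to have at least two leaves (a single edge is a star centred at either endpoint); this does hold here, because every assigned star is centred at a non-simplicial vertex and therefore always extends to at least two leaves, but it deserves a sentence, as the same subtlety is used silently in the Lemma itself.
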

\begin{proof}
Clearly, if $G$ does not contain vertices of degree one, the result holds by Lemma~\ref{singrado1}.
Let $v_1,v_2,\ldots,v_k$ be the vertices of degree one in $G$, $k\geq 1$, and let $x_1,x_2,\ldots,x_k$ be their unique neighbors respectively. 
We obtain from $G$ a new graph $G'$ as follows. For each vertex $v_i$, we add two new vertices $u_i,w_i$ such that $\{v_i,u_i,w_i\}$ induces 
a $K_3$. Clearly $G'$ is $\{C_4$,false-twin$\}$-free and it has $n+2k$ vertices where none of them has degree one and the set 
$\mathcal{A}$ of alone vertices has a good assignment. Then, by Lemma~\ref{singrado1}, $G'$ has at least $n+2k$ bicliques. Now for each 
$i$, $\{u_i,w_i\}$, $\{v_i,u_i,x_i\}$ and $\{v_i,w_i,x_i\}$ are bicliques in $G'$ that clearly, are 
not in $G$. Thus $G'$ has exactly $3k$ bicliques more than $G$. Finally we obtain that $G$ has at least $n+2k-3k = n-k$ bicliques as desired.
\end{proof}

Using the following Lemma, we can give a bound on the number $n-k$ of bicliques given by Theorem~\ref{bicsNoC4general}.

\begin{lemma}\label{grado1}
Let $G$ be a false-twin-free graph on $n\geq 3$ vertices. Then $G$ has at most $\lfloor \frac{n}{2} \rfloor$ vertices of degree one.
\end{lemma}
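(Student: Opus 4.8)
The plan is to argue by a counting argument based on the neighbors of the degree-one vertices. Let $D \subseteq V(G)$ denote the set of vertices of degree one, and write $d = |D|$; the goal is to show $d \leq \lfloor \frac{n}{2} \rfloor$. Each vertex in $D$ has exactly one neighbor, so there is a well-defined map $\varphi \colon D \to V(G)$ sending each degree-one vertex to its unique neighbor. I would show that the image $N = \varphi(D)$ satisfies both $|N| = d$ and $N \cap D = \emptyset$, from which the bound follows immediately.

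First I would establish injectivity of $\varphi$. Suppose two distinct degree-one vertices $u$ and $w$ had the same neighbor $x$, i.e. $\varphi(u) = \varphi(w) = x$. Then $N(u) = \{x\} = N(w)$, so $u$ and $w$ would be false-twins, contradicting the hypothesis that $G$ is false-twin-free. Hence $\varphi$ is injective and $|N| = d$. This is the step where the false-twin-free assumption is used, and it is really the crux of the statement: distinct degree-one vertices must hang off distinct neighbors.

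Next I would show that no neighbor of a degree-one vertex is itself of degree one, i.e. $N \cap D = \emptyset$. Suppose $x = \varphi(u) \in D$ for some $u \in D$. Then $d(x) = 1$ and $x$ is adjacent to $u$, forcing $N(x) = \{u\}$; together with $N(u) = \{x\}$ this means $\{u, x\}$ is a connected component of $G$ isomorphic to $K_2$. Since $G$ is connected, this component would be all of $G$, giving $n = 2$ and contradicting $n \geq 3$. Therefore $N \cap D = \emptyset$. I expect this to be the most easily overlooked step rather than a genuinely hard one; it is exactly where the connectivity assumption and the bound $n \geq 3$ enter.

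Finally, combining the two facts, $D$ and $N$ are disjoint subsets of $V(G)$ with $|D| = |N| = d$, so $2d = |D| + |N| = |D \cup N| \leq n$. Hence $d \leq \frac{n}{2}$, and since $d$ is an integer we conclude $d \leq \lfloor \frac{n}{2} \rfloor$, as desired.
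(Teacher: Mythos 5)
Your proof is correct and takes essentially the same approach as the paper's: the crux in both is that distinct degree-one vertices must have distinct unique neighbors, since otherwise they would be false-twins. If anything, your write-up is more complete than the paper's one-line contradiction argument, because you explicitly verify that no neighbor of a degree-one vertex is itself of degree one (using connectivity and $n \geq 3$), a disjointness fact that the paper's pigeonhole step relies on silently.
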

\begin{proof}
By contrary, if $G$ has more than $\lfloor \frac{n}{2} \rfloor$ vertices of degree one, then there must exist two of them having the same 
unique neighbor, that is, they are false-twin vertices. A contradiction.
\end{proof}

Combining last results, we obtain the main theorem of the section. It gives a tight lower bound on the number of bicliques for this subclass of  
$\{C_4$,false-twin$\}$-free graphs.

\begin{theorem}\label{bicsNoC4}
Let $G$ be a $\{C_4$,false-twin$\}$-free graph on $n\geq 3$ vertices such that the set $\mathcal{A}$ of  alone vertices has a 
good assignment. 
Then $G$ has at least $\lceil \frac{n}{2} \rceil$ bicliques.
\end{theorem}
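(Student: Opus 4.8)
The plan is to obtain this statement as an immediate consequence of the two preceding results, Theorem~\ref{bicsNoC4general} and Lemma~\ref{grado1}, with no further combinatorial work on bicliques required. The two ingredients are designed to fit together exactly: Theorem~\ref{bicsNoC4general} converts an upper bound on the number of degree-one vertices into a lower bound on the number of bicliques, while Lemma~\ref{grado1} supplies precisely the upper bound on degree-one vertices that is needed.

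Concretely, I would let $k$ denote the number of vertices of degree one in $G$. Since $G$ is $\{C_4$,false-twin$\}$-free on $n \geq 3$ vertices and, by hypothesis, the set $\mathcal{A}$ of alone vertices has a good assignment, the hypotheses of Theorem~\ref{bicsNoC4general} hold for this value of $k$, so $G$ has at least $n - k$ bicliques. On the other hand, $G$ is in particular false-twin-free on $n \geq 3$ vertices, so Lemma~\ref{grado1} applies and gives $k \leq \lfloor \frac{n}{2} \rfloor$.

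Chaining these two inequalities yields at least $n - k \geq n - \lfloor \frac{n}{2} \rfloor$ bicliques, and the only remaining step is the elementary identity $n - \lfloor \frac{n}{2} \rfloor = \lceil \frac{n}{2} \rceil$, valid for both parities of $n$, which produces the claimed bound of $\lceil \frac{n}{2} \rceil$. I do not expect any real obstacle here: all the substantive effort lives in Lemma~\ref{singrado1} (the degree-one-free case) and its extension to Theorem~\ref{bicsNoC4general}, so the present theorem is essentially a clean packaging of those. The only point to verify with care is that both quoted results are being invoked under hypotheses that genuinely hold, and in particular that taking $k$ to be the \emph{exact} number of degree-one vertices keeps $n - k$ as large as possible, so that the combination with the sharp count of Lemma~\ref{grado1} delivers the tight constant $\lceil \frac{n}{2} \rceil$ rather than a weaker one.
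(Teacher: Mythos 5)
Your proposal is correct and coincides exactly with the paper's own proof: the paper also derives Theorem~\ref{bicsNoC4} by applying Theorem~\ref{bicsNoC4general} with $k$ the number of degree-one vertices, bounding $k \leq \lfloor \frac{n}{2} \rfloor$ via Lemma~\ref{grado1}, and using $n - \lfloor \frac{n}{2} \rfloor = \lceil \frac{n}{2} \rceil$. No gaps; nothing further is needed.
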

\begin{proof}
The results follows from Lemma~\ref{grado1} and Theorem~\ref{bicsNoC4general} since 
$n-k \geq n - \lfloor \frac{n}{2} \rfloor = \lceil \frac{n}{2} \rceil$ as desired.
\end{proof}

This bound is tight. For this, consider any cycle $C_k=v_1v_2\ldots v_k$, $k \geq 5$, and join each $v_i$ with a new vertex $x_i$.
Clearly this graph is $\{C_4$,diamond,false-twin$\}$-free, it has $n=2k$ vertices and $\lceil \frac{n}{2} \rceil = k$ bicliques.

As a direct consequence of Theorem~\ref{bicsNoC4}, we obtain the following corollaries.

\begin{corollary}\label{bicsarbol}
Let $T$ be a false-twin-free tree on $n\geq 4$ vertices. Then $T$ has at least $\lceil \frac{n}{2} \rceil$ bicliques.
\end{corollary}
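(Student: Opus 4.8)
The plan is to obtain the corollary as an immediate specialization of Theorem~\ref{bicsNoC4}, so that essentially all the work is in checking that a false-twin-free tree belongs to the subclass covered by that theorem. First I would collect the structural facts that make membership automatic. A tree is acyclic, hence it contains no induced $C_4$ and no induced diamond, since both of these graphs contain a cycle; together with the hypothesis that $T$ is false-twin-free, this shows that $T$ is $\{C_4$,diamond,false-twin$\}$-free. By the observation recorded just before Lemma~\ref{singrado1}, every $\{C_4$,diamond,false-twin$\}$-free graph lies in the subclass of $\{C_4$,false-twin$\}$-free graphs whose set $\mathcal{A}$ of alone vertices admits a good assignment.

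Once membership is established, I would simply apply Theorem~\ref{bicsNoC4}: since $n\geq 4\geq 3$, it guarantees that $T$ has at least $\lceil n/2\rceil$ bicliques, which is exactly the assertion. It is worth noting that the leaves of $T$, which a tree always possesses, require no special handling at this stage, because the degree-one vertices are already absorbed inside Theorem~\ref{bicsNoC4} through the chain Lemma~\ref{singrado1}, Theorem~\ref{bicsNoC4general} and Lemma~\ref{grado1}.

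Should one prefer not to rely on the unproved observation, I would instead verify the good assignment for trees by hand, following the way the assignment is consumed in the proof of Lemma~\ref{singrado1}. In a tree the simplicial vertices are precisely the leaves, since a vertex of degree at least two together with two of its neighbours would create a triangle; moreover, for $n\geq 3$ the neighbour of every leaf has degree at least two, so every leaf is in fact an alone vertex. I would then assign each alone leaf $x$ to its unique neighbour $u$ and to an edge $uv'$ from $u$ to a second neighbour $v'\neq x$, which exists because $d(u)\geq 2$. This is a valid association: since $x\in N(u)\setminus N[v']$, the vertex $u$ is not dominated by $v'$. The good-assignment clause is then met because trees on $n\geq 3$ vertices have no true twins, and because false-twin-freeness forces two distinct leaves to have distinct neighbours, so two distinct alone leaves are always given different associated vertices.

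The main point requiring care is thus not a deep argument but the faithful matching of hypotheses: one must confirm that trees are diamond-free in order to invoke the observation, and one must read the assignment as it is actually used in Lemma~\ref{singrado1} (a vertex $v\in N(x)$ together with an incident edge $vv'$), since a leaf has only one neighbour. The restriction to $n\geq 4$, finally, is harmless and in fact forced: the only tree on three vertices is $P_3$, whose two endpoints are false twins, so no false-twin-free tree on three vertices exists and the gap from the $n\geq 3$ hypothesis of Theorem~\ref{bicsNoC4} costs nothing.
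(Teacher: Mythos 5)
Your proposal is correct and takes essentially the same route as the paper, which offers no written proof beyond declaring the corollary a direct consequence of Theorem~\ref{bicsNoC4}, implicitly via the observation that $\{C_4$,diamond,false-twin$\}$-free graphs (hence all false-twin-free trees, being acyclic) belong to the subclass admitting a good assignment. Your supplementary hand-verification of the good assignment is in fact a valuable addition rather than redundancy: as you note, the paper's literal definition of an assignment (requiring both $v$ and $v'$ to lie in $N(x)$) cannot be satisfied by a degree-one alone vertex such as a leaf, so the cited observation only holds under the charitable reading you spell out, a wrinkle the paper silently glosses over.
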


Moreover,

\begin{corollary}
For $\lceil \frac{n}{2} \rceil \leq k \leq n-2$ and $n \geq 4$ there exists a false-twin-free tree $T$ on $n$ vertices and $k$ bicliques.
\end{corollary}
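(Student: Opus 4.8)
The plan is to turn the existence statement into a counting problem about leaves. First I would record the exact biclique count of a tree. Since a tree contains no induced $C_4$, every biclique is a star $K_{1,r}$, as already observed at the beginning of this section; moreover the neighbors of a vertex in a tree form an independent set and cannot all be adjacent to a second vertex, so the unique maximal star centered at $v$ is $\{v\}\cup N(v)=K_{1,d(v)}$, and for $n\geq 3$ this is a biclique exactly when $d(v)\geq 2$ (a pendant edge is absorbed into the star centered at its non-leaf endpoint). As the center of a $K_{1,r}$ with $r\geq 2$ is determined, distinct vertices of degree $\geq 2$ yield distinct bicliques. Hence the number of bicliques of a tree equals its number of non-leaf vertices, namely $n-L$ where $L$ denotes the number of leaves. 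Writing $L=n-k$, the hypothesis $\lceil \tfrac{n}{2}\rceil \leq k \leq n-2$ becomes precisely $2\leq L\leq \lfloor \tfrac{n}{2}\rfloor$, so it suffices to construct, for each such $L$, a false-twin-free tree on $n$ vertices with exactly $L$ leaves.

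Next I would exhibit the construction. Following the idea behind Lemma~\ref{grado1}, note that in \emph{any} tree two vertices of degree $\geq 2$ can never be false-twins (a common neighborhood of size $\geq 2$ would force an induced $C_4$), so the only way a tree fails to be false-twin-free is to have two leaves with the same neighbor. I would therefore build a caterpillar whose spine consists of the $n-L$ vertices that are to be internal: take a path $c_1c_2\ldots c_{n-L}$ and attach $L$ pendant leaves, one to each of $L$ distinct spine vertices, choosing these $L$ vertices so that both endpoints $c_1$ and $c_{n-L}$ receive a pendant. Attaching pendants to the endpoints raises their degree to $2$, so every spine vertex ends up internal, and the $L$ pendants are exactly the leaves; the resulting tree has $(n-L)+L=n$ vertices and, by the count above, $n-L=k$ bicliques.

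Finally I would verify the two required properties and the feasibility of the parameters. False-twin-freeness holds because all pendants are attached to distinct spine vertices, so no two leaves share a neighbor, and no two internal vertices can be false-twins as noted. The biclique count is $k$ by the first paragraph. The only thing to check carefully — and the step I expect to be the main obstacle — is that the construction is always realizable while respecting the ``at most one pendant per vertex'' constraint: I need $n-L\geq 2$ distinct spine endpoints and enough spine vertices to host $L$ distinct pendants, i.e.\ $L\leq n-L$. Both follow from the range $2\leq L\leq \lfloor \tfrac{n}{2}\rfloor$ together with $n\geq 4$, since $n-L\geq \lceil \tfrac{n}{2}\rceil\geq 2$ and $L\leq \lfloor \tfrac{n}{2}\rfloor$ gives $2L\leq n$; the tight case $n=2L$ simply forces a pendant on every spine vertex. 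This confirms the construction works across the whole admissible range and completes the argument.
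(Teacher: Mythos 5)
Your proof is correct and complete: the key reduction (every biclique of a tree on at least three vertices is the full star $\{v\}\cup N(v)$ at a vertex of degree at least two, so the biclique count is exactly $n-L$ with $L$ the number of leaves) together with the caterpillar construction — pendants attached to distinct spine vertices, both spine endpoints covered — realizes every $L$ with $2\le L\le\lfloor\frac{n}{2}\rfloor$, i.e.\ every $k$ with $\lceil\frac{n}{2}\rceil\le k\le n-2$, and the one-pendant-per-vertex rule is precisely what guarantees false-twin-freeness. The paper states this corollary without proof, and your argument is the natural intended one; it mirrors the paper's own tightness example following Theorem~\ref{bicsNoC4} (a cycle with one pendant vertex attached to each cycle vertex), adapted to trees and made flexible in the number of pendants.
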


\section{Lower bounds in $\{K_3$,false-twin$\}$-free graphs}

In this section we study bounds on the minimum number of bicliques in $\{K_3$,false-twin$\}$-free graphs. 
We show first some useful lemmas.

\begin{lemma}\label{k1rsiempre}
Let $G$ be a $\{K_3$,false-twin$\}$-free graph. Then every vertex is contained in a $v$-star biclique.
\end{lemma}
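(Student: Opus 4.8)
The plan is to produce, for any prescribed vertex $v$, an explicit star biclique that contains it, but with the center of the star chosen carefully rather than at $v$ itself. The first thing I would record is that, because $G$ is $K_3$-free, the neighborhood $N(w)$ of every vertex $w$ is an independent set: two adjacent vertices of $N(w)$ would form a triangle with $w$. Consequently $\{w\}\cup N(w)$ always induces a star $K_{1,d(w)}$ with center $w$, so the only issue is maximality. I would therefore first isolate the criterion: the star $\{w\}\cup N(w)$ is a biclique if and only if there is no vertex $u$, distinct from and non-adjacent to $w$, with $N(w)\subseteq N(u)$. The point is that the leaf side already contains all of $N(w)$, so no vertex can be added there; the only conceivable enlargement puts a new vertex $u$ on the center side, which forces $u$ to be non-adjacent to $w$ and adjacent to every leaf, i.e.\ $N(w)\subseteq N(u)$.

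Next I would turn false-twin-freeness into a degree inequality. If some $u$ as above exists, then $N(w)\neq N(u)$, since otherwise $u$ and $w$ would be false-twins; hence $N(w)\subseteq N(u)$ is a strict containment and $d(u)>d(w)$. Thus any obstruction to the maximality of $\{w\}\cup N(w)$ is a vertex of strictly larger degree than $w$.

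The core of the argument is then a maximum-degree choice among the neighbors of $v$. Since $G$ is connected, $v$ has at least one neighbor; let $w$ be a neighbor of $v$ of maximum degree among all neighbors of $v$, and consider $\{w\}\cup N(w)$. It contains $v$, because $v\in N(w)$, so $v$ occurs as a leaf. To check maximality I would argue by contradiction: if it were not a biclique, the criterion would supply a vertex $u$, non-adjacent to $w$, with $N(w)\subseteq N(u)$; but $v\in N(w)\subseteq N(u)$ forces $u$ to be adjacent to $v$, so $u$ is itself a neighbor of $v$, while the previous step gives $d(u)>d(w)$, contradicting the maximality of $d(w)$ among neighbors of $v$. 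Hence $\{w\}\cup N(w)$ is a star biclique containing $v$.

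I expect the one genuine subtlety to be the reason for not centering the star at $v$: a vertex $v$ can be dominated by a non-neighbor $u$ with $N(v)\subseteq N(u)$, in which case $\{v\}\cup N(v)$ extends to a $K_{2,r}$ and is not maximal, so $v$ need not be the center of any star biclique. Re-rooting the star at a maximum-degree neighbor of $v$ is exactly what circumvents this: such a neighbor cannot be dominated without exhibiting an even higher-degree neighbor of $v$, and $v$ is then captured as a leaf instead of as a center.
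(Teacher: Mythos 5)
Your proof is correct and follows essentially the same route as the paper: the paper's argument also handles the non-maximal case by taking a maximum-degree vertex $u$ in $N(v)$ and using false-twin-freeness to show that any dominator of $u$ would be a neighbor of $v$ of strictly larger degree, so that $\{u\}\cup N(u)$ is a star biclique containing $v$. Your write-up merely makes the maximality criterion and the strict-degree inequality explicit (and treats the degree-one case uniformly), which the paper leaves terse.
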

\begin{proof}
Let $v$ be a vertex. If $|N(v)|=1$ then the result clearly follows. Suppose now that $|N(v)| > 1$. Now, since $N(v)$ is an independent set, 
$\{v\} \cup N(v)$ is contained in one biclique. If there is no vertex $w$ such that $N(v) \subseteq N(w)$ then $\{v\} \cup N(v)$ is a $v$-star biclique. 
Otherwise, let $u$ be the vertex with maximum degree among all vertices in $N(v)$. Clearly, since $G$ is $\{K_3$,false-twin$\}$-free, 
if there are two vertices of same maximum degree, then they have some different neighbors. Hence, $\{u\} \cup N(u)$ is a $u$-star biclique 
that contains the vertex $v$ as desired.
\end{proof}

Based on the proof of last lemma, we obtain this immediate result.

\begin{corollary}
Let $G$ be a false-twin-free graph. Let $v$ be a vertex such that $d(v)=\Delta(G)$ and $v$ does not belong to a $K_3$. 
Then $\{v\} \cup N(v)$ is a biclique.
\end{corollary}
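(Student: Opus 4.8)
The plan is to verify directly the two defining properties of a biclique for the set $S=\{v\}\cup N(v)$: that it induces a bipartite complete subgraph, and that this subgraph is maximal. The first property will be immediate from the fact that $v$ lies in no triangle; the whole content of the argument sits in establishing maximality, and this is exactly where the maximum-degree and false-twin-free hypotheses enter.

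First I would note that, since $v$ belongs to no $K_3$, no two vertices of $N(v)$ can be adjacent (otherwise they would form a triangle with $v$), so $N(v)$ is an independent set. As $v$ is adjacent to every vertex of $N(v)$, the set $S$ induces a $K_{1,d(v)}$ with parts $\{v\}$ and $N(v)$. In particular $S$ is a bipartite complete induced subgraph, and it remains only to prove that it is maximal.

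For maximality I would take an arbitrary vertex $z\notin S$ and show it cannot be added to $S$ while keeping the bipartite complete structure. If $z$ were adjacent to $v$ then $z\in N(v)\subseteq S$, a contradiction; so $z$ is non-adjacent to $v$, and in any bipartite complete partition of $S\cup\{z\}$ the two non-adjacent vertices $z$ and $v$ must lie in the same part. This forces the opposite part to be $N(v)$ and $z$ to be adjacent to every vertex of $N(v)$, i.e.\ $N(v)\subseteq N(z)$. I expect this to be the only real (though brief) step, and it is closed by the remaining hypotheses: $N(v)\subseteq N(z)$ gives $d(z)\geq d(v)=\Delta(G)$, hence $d(z)=d(v)$ and therefore $N(v)=N(z)$; but then $v$ and $z$ would be false-twins, contradicting that $G$ is false-twin-free. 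Thus no vertex can be appended to $S$, so $S$ is maximal and hence a biclique.

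This is precisely the mechanism underlying the proof of Lemma~\ref{k1rsiempre}, where the sole obstruction to $\{v\}\cup N(v)$ being a biclique was the existence of a vertex $w$ with $N(v)\subseteq N(w)$; the maximum-degree assumption here rules out any such $w$, which is why the corollary follows immediately. The one bookkeeping point I would address is the degenerate case $d(v)\leq 1$, which under the standing connectivity assumption reduces to $G=K_2$ (or the trivial one-vertex graph), where the claim is clear.
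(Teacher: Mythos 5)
Your proof is correct and follows exactly the mechanism the paper intends: the corollary is stated as an immediate consequence of the proof of Lemma~\ref{k1rsiempre}, where the only obstruction to $\{v\}\cup N(v)$ being a biclique is a vertex $w$ with $N(v)\subseteq N(w)$, and your maximum-degree plus false-twin-free argument rules out precisely such a $w$. Your explicit verification of maximality (and the observation that only the local triangle-freeness of $v$, not global $K_3$-freeness, is needed) just makes the paper's ``immediate'' step fully rigorous.
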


The next result will help us to prove the main theorem of the section.

\begin{lemma}\label{vkbics}
Let $G$ be a $\{K_3$,false-twin$\}$-free graph. Let $v$ be a vertex such that $d(v)=k$. Then $v$ belongs to at least $k$ different bicliques.
\end{lemma}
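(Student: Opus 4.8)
The plan is to exhibit, for each neighbor of $v$, one biclique through $v$, and then to prove that these $k$ bicliques are pairwise distinct. For a vertex set $S$ write $\widehat{N}(S)=\bigcap_{x\in S}N(x)$ for the set of common neighbors of $S$. Enumerate the neighbors of $v$ as $u_1,\dots,u_k$; since $G$ is $K_3$-free these form an independent set. For each $i$ I set $A_i:=N(u_i)$ and $W_i:=\widehat{N}(A_i)$, the set of vertices adjacent to every neighbor of $u_i$, and I claim $(A_i,W_i)$ is a biclique containing $v$.

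First I would check the elementary closure identity $\widehat{N}(W_i)=A_i$. The inclusion $A_i\subseteq\widehat{N}(W_i)$ is immediate, since each vertex of $W_i$ is adjacent to all of $A_i$; conversely any $x\in\widehat{N}(W_i)$ is adjacent to every vertex of $W_i$, in particular to $u_i$ (note $u_i\in W_i$, as $u_i$ is adjacent to all of $N(u_i)$), hence $x\in N(u_i)=A_i$. With this identity, $(A_i,W_i)$ is a genuine biclique: the two parts are disjoint (a vertex in both would be adjacent to itself); each part is independent, because any two vertices of $A_i$ share the common neighbor $u_i$ and any two vertices of $W_i$ share the common neighbor $v\in A_i$, so $K_3$-freeness forbids an edge inside either part; all edges between the parts are present by the definition of $W_i$; and maximality follows from $\widehat{N}(A_i)=W_i$ together with $\widehat{N}(W_i)=A_i$, since a vertex added to the $A_i$-side would lie in $\widehat{N}(W_i)=A_i$ already, and symmetrically for $W_i$. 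Finally $v\in A_i=N(u_i)$ because $u_i$ is a neighbor of $v$, so each $(A_i,W_i)$ contains $v$.

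The main step is injectivity: I must show $(A_i,W_i)\ne(A_j,W_j)$ for $i\ne j$. Each of these is a connected complete bipartite graph, so its bipartition is unique, and hence if two of them coincided as subgraphs the unordered pairs $\{A_i,W_i\}$ and $\{A_j,W_j\}$ would be equal. In the aligned case $A_i=A_j$, that is $N(u_i)=N(u_j)$, which is impossible since $u_i\ne u_j$ and $G$ is false-twin-free. In the swapped case $A_i=W_j$ and $W_i=A_j$; then $u_i\in W_i=A_j=N(u_j)$ would force $u_i$ adjacent to $u_j$, contradicting that $u_i,u_j$ are non-adjacent neighbors of $v$. Hence the $k$ bicliques are distinct, and $v$ lies in at least $k$ of them.

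I expect injectivity to be the crux, and it is precisely there that both hypotheses are consumed: false-twin-freeness kills the aligned collision $N(u_i)=N(u_j)$, while $K_3$-freeness kills the swapped collision. The only remaining points needing care are degenerate configurations (for example a neighbor $u_i$ of degree one, or the possibility that $v$ is dominated by another vertex), which I would confirm are handled uniformly by the closure identity $\widehat{N}(W_i)=A_i$ rather than by separate case analysis.
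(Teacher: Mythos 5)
Your proof is correct and takes essentially the same approach as the paper's: for each neighbor $u_i$ of $v$ the paper also builds the biclique whose parts are $N(u_i)$ and the common neighborhood of $N(u_i)$ (its $Cl(S_{v_i})$ is exactly your $W_i$, just computed inside an auxiliary induced subgraph $G'$ spanning two neighborhood levels around $v$), and it likewise gets distinctness by showing a collision would force $N(u_i)=N(u_j)$, contradicting false-twin-freeness. Your rendition is somewhat cleaner in that it works directly in $G$ (avoiding the step of lifting bicliques from $G'$ to $G$) and explicitly rules out the swapped-bipartition collision, which the paper passes over silently.
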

\begin{proof}
Let $v_1,v_2,\ldots,v_k$ be the neighbors of $v$. Clearly they are an independent set. Let $x_1,x_2,\ldots,x_\ell$ be the vertices adjacent to  
$v_1,v_2,\ldots,v_k$ (not including $v$). Let $G'$ be the subgraph induced by $\{v\} \cup \{v_1,v_2,\ldots,v_k\} \cup \{x_1,x_2,\ldots,x_\ell\}$. 
Clearly, $v_1,v_2,\ldots,v_k$ are not false-twins in $G'$ and since $G$ is $K_3$-free, $v$ is not adjacent to any $x_j$, $1 \leq j \leq \ell$. 
Now, for each $1 \leq i \leq k$, let $S_{v_i} = \{N_{G'}(x_j) : v_i \in N_{G'}(x_j), 1 \leq j \leq \ell\} \cup \{N(v)\}$.
Let $Cl(S_{v_i}) = \bigcap_{S \in S_{v_i}}S$. Observe first that $S_{v_i} \neq \emptyset$ for all $i$ since $N(v)$ belongs to all of them.
Observe then that $Cl(S_{v_i}) \cup (\{x_j : N_{G'}(x_j) \in S_{v_i}, 1 \leq j \leq \ell\ \} \cup \{v\})$ is a biclique in $G'$ and therefore a biclique in $G$. 
We show now that $Cl(S_{v_i}) \neq Cl(S_{v_j})$ for all $i\neq j$, i.e., $v$ belongs to $k$ different bicliques in $G$. 
Suppose by contrary, that $Cl(S_{v_i}) = Cl(S_{v_j})$. 
Now, since $v_i \in Cl(S_{v_i})$, we have $v_i \in Cl(S_{v_j})$. Similarly, $v_j \in Cl(S_{v_i})$. So, for all 
$S \in S_{v_i}$, we have $v_j \in S$. Also, for all $S \in S_{v_j}$, we have $v_i \in S$. That is, $N(v_i) = N(v_j)$, a contradiction since 
$G$ has no false-twin vertices.
\end{proof}

As a corollary, we obtain the following.

\begin{corollary}\label{kbicsperdidas}
Let $G$ be a $\{K_3$,false-twin$\}$-free graph. Suppose that there is a vertex $v$ such that $G-\{v\}$ has $k$ sets of false-twin vertices.
Then $G-\{v\}$ has at least $k$ bicliques less than $G$.
\end{corollary}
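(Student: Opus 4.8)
The plan is to deduce Corollary~\ref{kbicsperdidas} directly from Lemma~\ref{vkbics}, exploiting the fact that deleting a single vertex $v$ can only create false-twins among vertices whose \emph{sole} point of difference was $v$ itself. First I would set $H = G - \{v\}$ and let the $k$ sets of false-twin vertices in $H$ be $T_1, \ldots, T_k$. The key structural observation is that if two vertices $a, b$ are false-twins in $H$ but not in $G$, then their neighborhoods in $G$ differed exactly in $v$; since $N_H(a) = N_H(b)$, we must have (without loss of generality) $v \in N_G(a)$, $v \notin N_G(b)$, and $N_G(a) = N_G(b) \cup \{v\}$. In particular, \emph{every} such newly-formed twin class $T_j$ contains at least one vertex adjacent to $v$ in $G$, because a twin class that was already a twin class in $G$ would contradict that $G$ is false-twin-free. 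Thus each $T_j$ contributes at least one neighbor of $v$, and since the $T_j$ are disjoint we can pick one representative neighbor $u_j \in T_j \cap N_G(v)$ for each $j$, giving $k$ distinct neighbors of $v$; hence $d_G(v) \geq k$.

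Next I would apply Lemma~\ref{vkbics} to conclude that $v$ belongs to at least $d_G(v) \geq k$ distinct bicliques of $G$. The idea is then to argue that each twin class $T_j$ corresponds to a biclique of $G$ that is ``lost'' upon deleting $v$, i.e.\ that does not survive as a biclique in $H$. Concretely, the bicliques produced in the proof of Lemma~\ref{vkbics} are the $v$-stars (or more general bicliques containing $v$) built around the neighbors $v_i$; the whole point of that proof is that distinct neighbors yield distinct bicliques precisely because the corresponding closures $\mathrm{Cl}(S_{v_i})$ differ, which in turn is guaranteed by false-twin-freeness. When we restrict attention to the $k$ representatives $u_1, \ldots, u_k$ coming from the newly-merged classes, each associated biclique contains $v$ and therefore cannot appear in $H = G - \{v\}$. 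So we obtain $k$ bicliques of $G$ that are not bicliques of $H$, which is exactly the claim that $H$ has at least $k$ fewer bicliques than $G$.

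The main obstacle I anticipate is the bookkeeping in the final counting step: showing that $H$ has \emph{at least} $k$ fewer bicliques requires more than just exhibiting $k$ bicliques of $G$ absent from $H$ — one must ensure we are not secretly double-counting, and ideally that the merging of each class $T_j$ kills a biclique without simultaneously \emph{creating} a new biclique in $H$ that would offset the loss. The cleanest way to handle this is to exploit the false-twin-invariance of biclique count already invoked in the introduction (adding or removing false-twins does not change the number of bicliques). I would therefore compare the biclique count of $H$ with that of $\mathrm{Tw}(H)$: since each class $T_j$ collapses to a single representative in $\mathrm{Tw}(H)$ and $H$ and $\mathrm{Tw}(H)$ have equally many bicliques, it suffices to match the $k$ ``lost'' bicliques against the structural reduction, confirming that the deficit is genuinely at least $k$ and not merely a relabeling.

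The delicate point worth double-checking is the degenerate case where some $T_j$ has its representative adjacent to $v$ while another vertex of $T_j$ is not; this is exactly the configuration $N_G(a) = N_G(b) \cup \{v\}$ identified above, and it is what forces $u_j$ to be a genuine neighbor of $v$. Verifying that this holds for \emph{every} newly-formed class — rather than assuming it — is the crux, since a class that was already twin in $G$ is impossible by hypothesis, so every one of the $k$ classes must be newly formed and hence supplies a distinct neighbor. Once this is pinned down, the bound $d_G(v) \geq k$ and Lemma~\ref{vkbics} finish the argument.
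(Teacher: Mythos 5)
Your first half is sound and essentially matches the paper: since $G$ is false-twin-free, every false-twin class of $H=G-\{v\}$ has exactly two vertices, exactly one of which is adjacent to $v$, so the representatives $u_1,\ldots,u_k$ are distinct neighbors of $v$, giving $d_G(v)\geq k$, and the proof of Lemma~\ref{vkbics} produces $k$ pairwise distinct bicliques of $G$, the one associated with $u_j$ containing both $v$ and $u_j$. The genuine gap is in the final counting step, and the repair you propose does not close it. Exhibiting $k$ bicliques of $G$ that are not bicliques of $H$ does not by itself show that $H$ has at least $k$ fewer bicliques: a priori, deleting $v$ could also \emph{create} bicliques (complete bipartite induced subgraphs that failed to be maximal in $G$ only because of $v$ become maximal in $H$), offsetting the losses. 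You flag this yourself, but your fix --- comparing $H$ with $Tw(H)$ --- is not an argument: the equality of the biclique counts of $H$ and $Tw(H)$ relates those two graphs to each other and says nothing about how the count of $H$ compares with that of $G$, which is the quantity at stake; ``matching the lost bicliques against the structural reduction'' is a restatement of the claim, not a proof of it.

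The missing ingredient, which is exactly what the paper's proof uses, is the fact that false-twins of $H$ lie in exactly the same bicliques of $H$; in particular every biclique of $H$ containing $u_j$ also contains its partner $w_j$. With this the count goes through: every biclique $C$ of $H$ extends to a unique biclique of $G$ (namely $C$ itself if it is still maximal in $G$, and $C\cup\{v\}$ otherwise, since by maximality of $C$ in $H$ no vertex other than $v$ can be added), and this map is injective, so the deficit equals the number of bicliques of $G$ outside its image. The biclique $B_j$ associated with $u_j$ contains $v$ and $u_j$ but not $w_j$ (no biclique can contain all of $v,u_j,w_j$, as $w_j$ is adjacent to neither $v$ nor $u_j$); hence $B_j$ is not a biclique of $H$, nor is it $C\cup\{v\}$ for any biclique $C$ of $H$, because such a $C=B_j\setminus\{v\}$ would be a biclique of $H$ containing $u_j$ without $w_j$, contradicting the twin fact. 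Thus $B_1,\ldots,B_k$ all lie outside the image and the deficit is at least $k$. The paper secures the same point slightly differently, by applying Lemma~\ref{vkbics} inside the induced subgraph on $\{v\}\cup\{u_1,\ldots,u_k\}\cup N(u_1)\cup\cdots\cup N(u_k)$, which excludes every $w_j$, and then checking that the extensions of these bicliques to $G$ cannot absorb any $w_j$; either route works, but some version of the twin-biclique containment fact is indispensable, and it appears nowhere in your write-up.
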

\begin{proof}
Observe first that, since $G$ has no false-twin vertices, every set of false-twin vertices in $G-\{v\}$ has size exactly two.
Let $\{v_i,w_i\}$ be the $k$ sets of false-twin vertices, such that $v$ is adjacent to $v_i$, $i=1,\ldots,k$. Observe now that, since $v_i$ and $w_i$ 
are false-twins in $G-\{v\}$, they belong to exactly the same bicliques but those bicliques containing the edge $vv_i$. Consider now the subgraph induced by the 
vertices $\{v\} \cup \{v_1,v_2,\ldots,v_k\} \cup N(v_1) \cup \cdots \cup N(v_k)$. Call this graph $G'$. Clearly, $v_1,v_2,\ldots,v_k$ are not 
false-twins in $G'$. Now, by Lemma~\ref{vkbics}, $v$ belongs to $k$ different bicliques in $G'$. These bicliques in $G'$ are either bicliques 
or are contained in bigger bicliques in $G$, but they do not contain any of the vertices $w_i$. Therefore, after removing $v$, these $k$ bicliques 
are lost in $G-\{v\}$ since any other biclique containing any $v_i$ contains also $w_i$. 
\end{proof}

Combining the last three results, it follows the main theorem of the section. It gives a tight lower bound on the number of bicliques for 
$\{K_3$,false-twin$\}$-free graphs.

\begin{theorem}\label{bicsNoK3}
Let $G$ be a $\{K_3$,false-twin$\}$-free graph on $n\geq 4$ vertices. Then $G$ has at least $\lceil \frac{n}{2} \rceil$ bicliques.
\end{theorem}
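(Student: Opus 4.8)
The plan is to proceed by strong induction on $n$, using Corollary~\ref{kbicsperdidas} as the bridge between $G$ and a strictly smaller graph. First I would settle the small cases by hand: a connected $\{K_3,\text{false-twin}\}$-free graph on $4$ or $5$ vertices has at least $2$, respectively $3$, bicliques, and the tiny graphs on one or two vertices that can appear after a reduction ($0$ and $1$ bicliques) are treated directly. For the inductive step I would pick a \emph{non-cut} vertex $v$ (one exists because $G$ is connected) and set $G'=G-v$, which is again connected and $\{K_3,\text{false-twin}\}$-free. Let $k$ be the number of sets of false-twins created in $G'$. As observed in the proof of Corollary~\ref{kbicsperdidas}, each such set has size exactly two and is determined by a distinct neighbour of $v$, so $k\le d(v)$ and the reduced graph $Tw(G')$ has exactly $n-1-k$ vertices; deleting one member of a false-twin pair preserves connectivity, so $Tw(G')$ is still connected and $\{K_3,\text{false-twin}\}$-free.

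Since inserting false-twins does not change the biclique count, $G'$ and $Tw(G')$ have the same number of bicliques. Applying the induction hypothesis to $Tw(G')$ yields at least $\lceil \frac{n-1-k}{2}\rceil$ bicliques in $G'$, and Corollary~\ref{kbicsperdidas} adds back at least $k$ bicliques when passing from $G'$ to $G$. Hence $G$ has at least $\lceil \frac{n-1-k}{2}\rceil + k$ bicliques. A short computation rewrites this quantity as $n-1-\lfloor \frac{n-1-k}{2}\rfloor$, which is non-decreasing in $k$; checking it against $\lceil \frac{n}{2}\rceil$ shows that it already suffices for every $k\ge 0$ when $n$ is even, and for every $k\ge 1$ when $n$ is odd. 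Thus the induction closes immediately except in one boundary situation.

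The delicate point, which I expect to be the main obstacle, is exactly the case $n$ odd and $k=0$, where the crude estimate falls short by one. Here I would argue that deleting $v$ must destroy at least one biclique outright, so that $G$ has at least $\lceil \frac{n-1}{2}\rceil + 1 = \lceil \frac{n}{2}\rceil$ bicliques. The bookkeeping I would make rigorous is the identity $b(G)-b(G-v)=|\mathcal{B}_v|-t$, where $b(\cdot)$ denotes the number of bicliques, $\mathcal{B}_v$ is the set of bicliques of $G$ through $v$, and $t$ counts the bicliques that are maximal in $G-v$ but not in $G$; every biclique of $G$ avoiding $v$ survives in $G-v$, which makes this accounting automatic. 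By Lemma~\ref{vkbics} one has $|\mathcal{B}_v|\ge d(v)$, so it is enough to produce a vertex for which strictly fewer than $d(v)$ new bicliques are created. A natural route is to take $v$ to be a non-cut vertex of minimum degree and to use its star biclique from Lemma~\ref{k1rsiempre} as a biclique of $G$ with no maximal replacement in $G-v$.

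A complementary reduction should simplify this last step: if some vertex of $G$ admits a deletion that \emph{does} create a false-twin, then one simply deletes that vertex instead and lands in the already-settled range $k\ge 1$. Only when no such vertex exists—equivalently, when the open neighbourhoods are pairwise at Hamming distance at least two—must the destroyed-biclique argument be invoked, and in that rigid regime one expects an exact cancellation $|\mathcal{B}_v|=t$ to be impossible. Ruling out this cancellation, and thereby guaranteeing the loss of at least one biclique in the odd case, is the crux that turns the arithmetic above into a complete proof of the stated bound $\lceil \frac{n}{2}\rceil$.
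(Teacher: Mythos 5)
Your proposal has the right skeleton and the arithmetic is correct: the identity $\lceil\frac{n-1-k}{2}\rceil+k=n-1-\lfloor\frac{n-1-k}{2}\rfloor$, the observation that this already beats $\lceil\frac{n}{2}\rceil$ except when $n$ is odd and $k=0$, and the use of Corollary~\ref{kbicsperdidas} together with $Tw(G')$ all mirror the connected case of the paper's proof. But the proposal contains a genuine, and in fact self-acknowledged, gap at exactly the decisive point: the case $n$ odd, $k=0$. You state that in this ``rigid regime one expects an exact cancellation $|\mathcal{B}_v|=t$ to be impossible'' and that ruling it out ``is the crux'' --- that is, the one step that separates the arithmetic from a proof is left unproved. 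Moreover, the concrete route you sketch for it does not work as stated: Lemma~\ref{k1rsiempre} only guarantees that your non-cut, minimum-degree vertex $v$ \emph{lies in} some star biclique, possibly as a leaf. If $v$ is a leaf of a star $S$ centered at $u$, then $S\setminus\{v\}=\{u\}\cup N_{G-v}(u)$ is again a star in $G-v$ and may perfectly well still be maximal there, in which case this biclique is replaced rather than destroyed and contributes nothing to $|\mathcal{B}_v|-t$. So no loss of a biclique is forced by the biclique you point to.

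The paper closes this case by making the opposite trade-off: it takes $v$ to be the \emph{center} of a star biclique (such a center exists by Lemma~\ref{k1rsiempre}), not a non-cut vertex. Then the $v$-star biclique $\{v\}\cup N(v)$ has no replacement in $G-v$, because by $K_3$-freeness its leaf set $N(v)$ is an independent set of size at least two and hence is not a biclique of $G-v$; this yields $b(G)\geq b(G-v)+1$ and settles your delicate case. The price is that $G-v$ may be disconnected, and the paper handles this with a separate component-wise analysis (at most one singleton component, induction or $Tw(\cdot)$ on each component, plus the extra star biclique through $v$ meeting every component). Your proposal avoids that disconnected-case analysis by construction, but without a proof that some non-cut vertex admits a destroyed biclique --- or a proof that a star center can always be chosen non-cut, which is false in general (in $P_5$ the only star centers are cut vertices) --- the induction does not close, and the statement remains unproved in the odd case.
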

\begin{proof}
The proof is by induction on $n$. For $n=4$ the result trivially holds. Suppose $n\geq 5$. Now, by Lemma~\ref{k1rsiempre} there is a vertex $v$ contained 
in a star biclique. Without loss of generality, we can suppose that $v$ is the center. 
Consider the graph $G'=G-\{v\}$. We have the following two cases.
\begin{itemize}
 \item $G'$ is disconnected. Let $G_1,G_2,\ldots,G_s$ be the connected components of $G'$ on $n_1,n_2,\ldots,n_s$ vertices respectively. Since $G$ has no 
false-twin vertices, there can exist at most one $G_i$ such that $n_i=1$. Suppose that there are $\ell$ components, 
$\{G_{i_1},G_{i_2},\ldots,G_{i_\ell}\} \subseteq \{G_1,G_2,\ldots,G_s\}$ on $n_{i_1},n_{i_2},\ldots,n_{i_\ell}$ vertices respectively, such that each $G_{i_j}$ has 
$k_{i_j}$ sets of false-twin vertices, $j=1,\ldots, \ell$. It is easy to see that each of these sets has exactly two vertices, otherwise $G$
would have false-twin vertices. Now, by Corollary~\ref{kbicsperdidas}, $G'$ has at least $k_{i_1}+k_{i_2}+\cdots+k_{i_\ell}$ bicliques less than $G$.
Also, since $G'$ is disconnected, $v$ along with at least one vertex of each of the $s$ components is a biclique $B$ in $G$ 
isomorphic to $K_{1,r}$ that clearly, is not a biclique in $G'$. Now, consider for each $G_{i_j}$ the graph 
$Tw(G_{i_j})$. Each of these graphs have $n_{i_j}-k_{i_j}$ vertices and no false-twin vertices. If $n_{i_j}-k_{i_j}=2$ then $Tw(G_{i_j})=K_2$, and 
therefore it has one biclique, i.e., at least $\Big\lceil \frac{n_{i_j}-k_{i_j}}{2} \Big\rceil$ bicliques. 
Note that $n_{i_j}-k_{i_j} \neq 3$ as $Tw(G_{i_j})$ is also $K_3$-free. If $n_{i_j}-k_{i_j} \geq 4$, 
by the inductive hypothesis, $Tw(G_{i_j})$ has also at least $\Big\lceil \frac{n_{i_j}-k_{i_j}}{2} \Big\rceil$ bicliques. Now, for all other $G_i$ 
without false-twin vertices, if $n_i=2$, $G_i$ has, as before, $1=\lceil \frac{n_i}{2} \rceil$ biclique, $n_i=3$ is impossible as $G_i$ is
$\{K_3$,false-twin$\}$-free and for $n_i \geq 4$, by the inductive hypothesis,
$G_i$ has at least $\lceil \frac{n_i}{2} \rceil$ bicliques. If we sum up everything (and suppose the worst case, that is, there exists one $G_i$, 
say $G_s$, such that $n_i=1$), then the number of bicliques of $G$ is at least 
\begin{displaymath}
\bigg(\sum\limits_{j=1}^{\ell}\bigg\lceil \frac{n_{i_j}-k_{i_j}}{2} \bigg\rceil + k_{i_j}\bigg) + 
\bigg(\sum\limits_{i=1,i\neq i_j}^{s-1}\bigg\lceil \frac{n_i}{2} \bigg\rceil\bigg) + 1 \geq 
\end{displaymath}
\begin{displaymath}
\bigg(\sum\limits_{j=1}^{\ell}\bigg\lceil \frac{n_{i_j}}{2} \bigg\rceil \bigg) + 
\bigg(\sum\limits_{i=1,i\neq i_j}^{s-1}\bigg\lceil \frac{n_i}{2} \bigg\rceil\bigg) + 1 \geq
\bigg(\sum\limits_{i=1}^{s-1}\bigg\lceil \frac{n_i}{2} \bigg\rceil\bigg) + 1 \geq
\Big\lceil \frac{n}{2} \Big\rceil
\end{displaymath}
as desired.

\item $G'$ is connected. Suppose first that in $G'$ there are $k$ sets of false-twin vertices. As before, each of these sets has two
vertices. Then, by Corollary~\ref{kbicsperdidas}, $G'$ has $k$ bicliques less than $G$. Consider now the graph $Tw(G')$. This graph has $n-k-1 \geq 4$ 
vertices (or just two vertices, i.e., one biclique. Remark that three vertices is not possible.) and no false-twin vertices, therefore we can 
apply the inductive hypothesis. We conclude that 
$Tw(G')$ has at least $\Big\lceil \frac{n-k-1}{2} \Big\rceil$ bicliques. Then, $G$ has at least 
$\Big\lceil \frac{n-k-1}{2} \Big\rceil + k \geq \lceil \frac{n}{2}\rceil$ 
bicliques as desired. Suppose last that $G'$ has no false-twin vertices. By the inductive hypothesis, $G'$ has at least $\lceil \frac{n-1}{2}\rceil$ 
bicliques. Finally, since the $v$-star biclique is not in $G'$, we conclude that $G$ has at least 
$\lceil \frac{n-1}{2}\rceil + 1 \geq \lceil \frac{n}{2}\rceil$ bicliques.
\end{itemize}
Since we covered all cases the proof is now complete.
\end{proof}

Clearly this bound is tight as the same family of graphs presented in Section $3$ is $\{K_3$,false-twin$\}$-free.

As a consequence of Theorem~\ref{bicsNoK3}, we have the following.

\begin{corollary}\label{bicsbip}
Let $G$ be a false-twin-free bipartite graph on $n\geq 4$ vertices. Then $G$ has at least $\lceil \frac{n}{2} \rceil$ bicliques.
\end{corollary}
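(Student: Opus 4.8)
The plan is to recognize that the bipartite hypothesis already delivers the $K_3$-free condition required by Theorem~\ref{bicsNoK3}, so that the corollary follows by a single application of that theorem. First I would recall the elementary fact that a bipartite graph contains no odd cycle; since $K_3$ is a triangle, i.e.\ an odd cycle of length three, no bipartite graph can contain $K_3$ as a subgraph, and in particular not as an induced subgraph. Hence every bipartite graph is $K_3$-free.

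Next I would combine this observation with the two standing assumptions in play: that $G$ is false-twin-free (the hypothesis of the corollary) and that $G$ is connected (the global convention of the paper). This shows that $G$ is a connected $\{K_3$,false-twin$\}$-free graph on $n\geq 4$ vertices. Since the threshold $n\geq 4$ is exactly the one demanded by Theorem~\ref{bicsNoK3}, I would then invoke that theorem directly to conclude that $G$ has at least $\lceil \frac{n}{2} \rceil$ bicliques, which is precisely the asserted bound.

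There is essentially no obstacle in this argument: the only nontrivial ingredient, Theorem~\ref{bicsNoK3}, has already been established, and the reduction rests solely on the implication that bipartiteness forbids triangles. The one point deserving a word of care is to confirm that no further hypothesis of Theorem~\ref{bicsNoK3} is being tacitly used, but that theorem asks for nothing beyond $\{K_3$,false-twin$\}$-freeness and $n\geq 4$, both of which hold. I would also remark that the bound remains tight in this bipartite setting, since the extremal family from Section~3 built on an even cycle $C_k$ ($k\geq 6$) is bipartite, false-twin-free, and attains exactly $\lceil \frac{n}{2} \rceil$ bicliques.
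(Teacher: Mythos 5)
Your proof is correct and is exactly the argument the paper intends: the corollary is stated as an immediate consequence of Theorem~\ref{bicsNoK3}, since bipartite graphs contain no odd cycles and hence no $K_3$, so the theorem applies verbatim. Your additional remark that tightness survives in the bipartite setting (taking the extremal construction over an even cycle) is a correct and worthwhile observation, though not required.
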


\section{Discussion for bounds in false-twin-free graphs}

In this section we address the following question:  Is it true
that every false-twin-free graph $G$ has at least $\lceil \frac{n}{2} \rceil$ bicliques ?

We answer this question showing a family of false-twin-free graphs with $k + 2^k -1$ vertices and $k^2$ bicliques.
Consider a graph $G$ constructed as follows. Take a clique $K=\{v_1,v_2,\ldots,v_k\}$ on $k$ vertices and an independent 
set $I=\{w_1,w_2,\ldots,w_{2^k-1}\}$. Consider the set of subsets $\mathcal{B}=\mathcal{P}(K)-\{\emptyset\}$, that is, the power set of $K$ 
minus the subset containing the empty set. Clearly $|\mathcal{B}|=2^k-1$ and all its subsets are different. Finally set $N(w_i) = B_i$, 
for $B_i \in \mathcal{B}$, $i=1,\ldots,2^k-1$.
It is easy to see that the graph $G$ constructed in this way has $k + 2^k -1$ vertices and no false-twins. Moreover, it has no induced $C_4$ 
therefore all its bicliques are stars. See Figure~\ref{contraejGraph}.

\begin{figure}[ht!]
	\centering
	\includegraphics[trim=0 0 0 0, scale=.5]{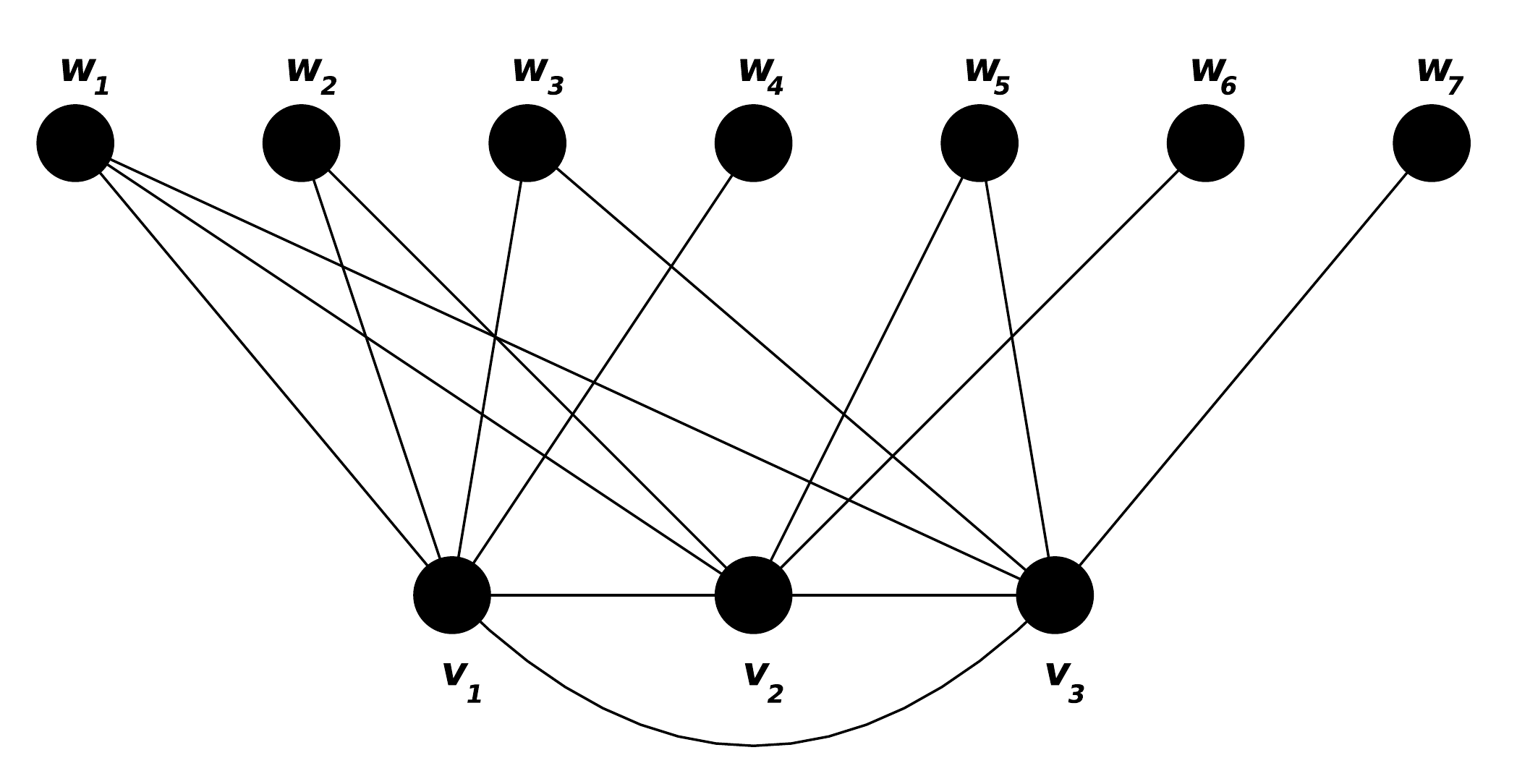}
	\caption{Construction of graph $G$ for $k=3$.}
	\label{contraejGraph}
\end{figure}

Now, $d(v_i)=2^{k-1}+k-1$ for all $i=1,\ldots,k$ and $N(v_i) \neq N(v_j)$ for all $1 \leq i \neq j \leq k$.
Then $\{v_i\} \cup N(v_i)$ is a $v_i$-star biclique.
Also, as $N(v_i) - N(v_j) \neq \emptyset$ for all $1 \leq i \neq j \leq k$, we have that 
$\{v_i\} \cup \{v_j\} \cup (N(v_i)-N(v_j))$ is a $v_i$-star biclique.
We can conclude that $G$ has $k + k(k-1) = k^2$ bicliques.
Finally if $n=k + 2^k -1$, we can see that for $k=6$ (i.e. $n=69)$, we have $35=\lceil \frac{n}{2} \rceil < k^2=36$ 
but for $k=7$ (i.e. $n=134)$, we have $67=\lceil \frac{n}{2} \rceil > k^2=49$.
In fact, for $k \in  \mathbb{R}$, $k \geq 6.13$, (i.e. $n \geq 75)$, we have that $\lceil \frac{n}{2} \rceil \geq k^2$.
Following the idea of this example,  we state the following conjecture.

\begin{conjecture}
Let $G$ be a false-twin-free graph on $n\geq 2$ vertices.\\ If $k + 2^k -1 \leq n < (k+1) + 2^{k+1} -1$, for $k \in \mathbb{N}$,
then:\\ If $n \leq 75$, then $G$ has at least $\lceil \frac{n}{2} \rceil$ bicliques, otherwise it has at least $k^2$ bicliques. 
\end{conjecture}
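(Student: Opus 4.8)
The plan is to reduce the conjecture to a single global lower bound — every false-twin-free graph on $n$ vertices has at least $k^2$ bicliques, where $k$ is fixed by $k+2^k-1\le n$ — and then to strengthen this to $\lceil n/2\rceil$ in the small range $n\le 75$, where the construction of Section~5 still lies above $n/2$. Since that construction realizes exactly $k^2$ bicliques, the entire difficulty is the lower bound; tightness is already in hand.

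For the global $k^2$ bound I would argue through the set system of neighborhoods rather than by vertex deletion. False-twin-freeness says the $n$ sets $\{N(v)\}_{v\in V}$ are pairwise distinct, so $n$ is exactly the number of distinct neighborhoods realized. Each neighborhood is in fact a union of biclique sides: $u\in N(v)$ precisely when some maximal biclique places $v$ on one side and $u$ on the opposite side, so $N(v)$ is the union, over all bicliques containing $v$, of the side opposite to $v$. Hence, if $G$ has only $b$ bicliques, every neighborhood is determined by a function from the $b$ bicliques to $\{\text{left},\text{right},\text{absent}\}$, and the crude count gives $n\le 3^{\,b}$, i.e. $b=\Omega(\log n)$. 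This generic VC-type bound is exponentially weaker than the conjecture, so it cannot suffice. The sharpening has to exploit the clique-plus-subsets geometry of the extremal example, where the bicliques are organized around $k$ high-degree centers, each carrying $k$ bicliques, and the $2^k-1$ distinct neighborhoods appear as the distinct intersection patterns of the sides sharing a center. I would therefore try to show that realizing roughly $2^{\sqrt{b}}$ distinct unions of opposite sides forces at least $k$ centers each incident to at least $k$ bicliques, which inverts to $b\ge k^2$.

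The inductive route of Theorem~\ref{bicsNoK3} should still be available for the small range $n\le 75$, and would likely be the cleanest way to obtain the sharper $\lceil n/2\rceil$ bound there. The obstacle in transplanting it is that Lemma~\ref{vkbics} and Corollary~\ref{kbicsperdidas} use $K_3$-freeness essentially: once triangles are allowed, $N(v)$ need not be independent, bicliques through $v$ may merge, and — as the extremal construction shows, where a center of degree about $2^{k-1}$ lies in only $k$ bicliques — a high-degree vertex can be incident to very few bicliques. So the first step is a replacement for Lemma~\ref{vkbics}: a lower bound on the number of bicliques through $v$ governed by the combinatorial structure of $N(v)$ rather than by $|N(v)|$, which can then be fed into the same deletion induction (passing to $Tw(G-\{v\})$) together with a count of the false-twin classes created by removing $v$.

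The hard part, and the reason the statement is only conjectured, is the quadratic-in-log lower bound $b\ge (\log_2 n)^2$: moving from the essentially free logarithmic bound to the conjectured square requires isolating the right two-dimensional structural invariant — one logarithmic factor for the number of biclique centers and a second for the number of bicliques per center. I would first test the conjecture against small perturbations of the construction (deleting or adding a single subset-vertex, enlarging the clique by one) to determine exactly which configurations are tight, and only then attempt the general counting bound. The numeric threshold $n=75$ and the real crossover $k\ge 6.13$ are not themselves obstacles: they emerge automatically as the point where $k^2$ overtakes $\lceil n/2\rceil$, once both the global and the small-range bounds are in place.
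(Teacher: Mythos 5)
The statement you set out to prove is not a theorem of the paper: it is stated as a conjecture, and the paper offers no proof of it. The paper's Section~5 supplies only the extremal construction (a clique $K$ on $k$ vertices plus $2^k-1$ independent vertices whose neighborhoods realize all nonempty subsets of $K$), which shows that the $\lceil \frac{n}{2} \rceil$ bound fails for general false-twin-free graphs and that $k^2$ bicliques are achievable; the conjecture is extrapolated from that example. Your proposal correctly identifies the role of this construction (tightness) and correctly derives the easy counting bound: since $N(v)$ is the union, over bicliques containing $v$, of the side opposite $v$, false-twin-freeness forces distinct membership patterns, so $n\leq 3^{b}$ and $b=\Omega(\log n)$. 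But, as you yourself concede, this is exponentially weaker than the conjectured $b\geq k^2\approx(\log_2 n)^2$, and the step you propose to bridge the gap --- that realizing roughly $2^{\sqrt{b}}$ distinct unions of opposite sides forces at least $k$ centers each incident to at least $k$ bicliques --- is neither proved nor even precisely formulated; nothing in your argument rules out many centers, each lying in few bicliques, collectively generating the required number of distinct neighborhoods. So the entire quantitative content of the statement is unestablished.

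The same holds for the small range $n\leq 75$: establishing $\lceil \frac{n}{2} \rceil$ for \emph{every} false-twin-free graph on at most $75$ vertices is not a finite verification one can gesture at, and you correctly observe that the induction behind Theorem~\ref{bicsNoK3} does not transplant, because Lemma~\ref{vkbics} and Corollary~\ref{kbicsperdidas} use $K_3$-freeness essentially --- indeed, in the paper's own construction a clique vertex has degree $2^{k-1}+k-1$ yet lies in only $k$ bicliques, so no bound of the form ``$d(v)$ bicliques through $v$'' can survive. You name the missing replacement lemma (a bound governed by the structure of $N(v)$ rather than by $|N(v)|$) but do not supply it. In short, your text is a well-calibrated research plan whose diagnosis of the obstacles matches the paper's own position --- the authors leave the statement open precisely because these steps are missing --- but it is not a proof, and the correct conclusion is that both you and the paper stop at the same point: a tight construction, a trivial logarithmic lower bound, and an unproven conjecture in between.
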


Moreover, as it is possible to extend this idea to construct a graph $G$ on $n$ vertices, for each $n$, 
$k + 2^k -1 \leq n \leq (k+1) + 2^{k+1} -1$, and containing from $k^2$ to $(k+1)^2$ bicliques, we present this tighter conjecture.

\begin{conjecture}
Let $G$ be a false-twin-free graph on $n\geq 2$ vertices.\\ If $k + 2^k -1 \leq n < (k+1) + 2^{k+1} -1$, for $k \in \mathbb{N}$,
then:\\ If $n \leq 75$, then $G$ has at least $\lceil \frac{n}{2} \rceil$ bicliques, otherwise it has at least 
$k^2+ \lfloor(2k+1)\frac{n-(k + 2^k -1)}{2^k+1}\rfloor$ bicliques. 
\end{conjecture}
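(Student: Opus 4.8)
The plan is to prove both regimes by strong induction on $n$, using vertex deletion followed by the $Tw$ operation exactly as in Theorem~\ref{bicsNoK3}, but now dropping the $K_3$-free hypothesis. For each $n$ let $k$ be the unique integer with $k+2^k-1\le n<(k+1)+2^{k+1}-1$, and write $\beta(n)=k^2+\lfloor(2k+1)\frac{n-(k+2^k-1)}{2^k+1}\rfloor$ for the target; this is the floor of the linear interpolation between $k^2$ (at $n=k+2^k-1$) and $(k+1)^2$ (at the next threshold), and it is exactly the number of bicliques of the Section~5 construction at each intermediate size. The first thing to record is that $\beta(n)-\beta(n-1)\in\{0,1\}$, and that this difference equals $1$ on only about $2k=O(\log n)$ of the roughly $2^k$ single-vertex steps inside each block. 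Hence the inductive step must exhibit, in every false-twin-free $G$ on $n$ vertices, a vertex whose deletion (after re-merging any twins it creates) costs at most $\beta(n)-\beta(n-1)$ bicliques.

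First I would isolate the two sub-cases that already work in Theorem~\ref{bicsNoK3}: if $G-\{w\}$ is disconnected, a $w$-star biclique together with a sum over components (each smaller, handled by induction) provides slack; and if deleting $w$ merges $t$ pairs of false-twins, a general-graph analogue of Corollary~\ref{kbicsperdidas} should recover $t$ bicliques to compensate. The difficulty is that Corollary~\ref{kbicsperdidas} rests on Lemma~\ref{vkbics}, whose proof uses $K_3$-freeness crucially (it needs $v$ non-adjacent to the $x_j$ so that the neighborhoods stay independent). For general graphs Lemma~\ref{vkbics} is simply false — a vertex of a diamond ($K_4$ minus an edge) has degree three yet lies in only two bicliques — so the first real task is to find the correct replacement: a bound on the number of bicliques destroyed when $w$ is deleted, in terms of the number of twin-pairs it creates, that remains valid when triangles are present. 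I expect this to require, for each merged pair $\{v_i,w_i\}$, the explicit construction of a biclique separating them, together with the distinctness of neighborhoods to argue these separating bicliques are pairwise different and all destroyed.

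The heart of the argument, and the main obstacle, is the choice of which vertex to delete so that the per-step cost matches $\beta(n)-\beta(n-1)$; since this difference is $0$ on the large majority of steps, no purely local deletion lemma can suffice and the accounting must be amortized across a whole block of $2^k+1$ vertices. Concretely I would try to show that any minimizer must contain a clique $K$ of size $k$ acting as a hub, that every remaining vertex leaves a distinct neighborhood trace on $K$ (which bounds $n\le k+2^k-1$ per block and ties the hub size to $\log_2 n$), and that the bicliques are then exactly the $k$ maximal stars at the hub vertices together with the $k(k-1)$ ``difference'' bicliques $\{v_i,v_j\}\cup(N(v_i)\setminus N(v_j))$, yielding the floor $k^2$. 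Matching the growth rate then reduces to proving that inserting one vertex into a minimizer cannot create fewer bicliques than the subset-construction does, i.e.\ that the family of Section~5 is extremal. This optimality statement is the genuinely hard step, and is where I suspect a clean one-vertex induction breaks down and a global counting argument — relating the number of distinct neighborhoods to the concept/neighborhood-signature lattice of the bicliques — becomes unavoidable.

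Finally, for the low regime $n\le 75$ the target reverts to $\lceil n/2\rceil$, which lies below $k^2$ until the crossover near $n\approx 75$; here I would aim to reprove the $\lceil \frac{n}{2}\rceil$ bound of Theorems~\ref{bicsNoC4} and~\ref{bicsNoK3} without any structural restriction, again by the deletion/$Tw$ induction, using the trivial fact that in a connected graph every vertex lies in at least one biclique (the unrestricted counterpart of Lemma~\ref{k1rsiempre}) to guarantee that a well-chosen deletion destroys at least one biclique. The two regimes must then be glued at the crossover, verifying that $\beta(75)\le\lceil \frac{75}{2}\rceil$ and that the inductive hypotheses transfer across the boundary; this last bookkeeping is routine once the extremality of the Section~5 family has been established.
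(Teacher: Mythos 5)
This statement is one of the paper's two closing \emph{conjectures}: the paper offers no proof of it, only the motivating construction (clique $K$ on $k$ vertices joined to an independent set realizing all nonempty subsets of $K$ as neighborhoods) showing that the bound, if true, would be attained. Your proposal is likewise not a proof, and you should be clear about that: every step that would actually establish the conjecture is deferred rather than carried out. Concretely, (i) the general-graph replacement for Lemma~\ref{vkbics} and Corollary~\ref{kbicsperdidas} is never formulated, let alone proved --- you correctly observe that Lemma~\ref{vkbics} fails once triangles are allowed (the diamond gives a degree-three vertex in only two bicliques), but ``find the correct replacement'' is listed as a task, not resolved; (ii) the extremality of the Section~5 family, which is the entire mathematical content of the conjecture, is explicitly labelled ``the genuinely hard step'' and left open, with only an appeal to an unspecified ``global counting argument''; (iii) the structural claim that any minimizer must contain a $k$-clique hub on which all other vertices leave distinct neighborhood traces is asserted without justification, and there is no reason a minimizer must resemble the extremal construction at all.

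There is also an internal tension in your framework that you acknowledge but do not repair: since $\beta(n)-\beta(n-1)=0$ for the vast majority of values of $n$ inside each block, a one-vertex induction would need deletions that destroy \emph{no} bicliques (after re-merging twins), which generally do not exist; the amortized accounting over a whole block of $2^k+1$ vertices that is supposed to fix this is never constructed. Finally, the crossover bookkeeping is stated backwards --- at $n=75$ one has $k=6$, $\lceil n/2\rceil=38\geq k^2=36$, so the issue is that the two bounds swap roles there, not that $\beta(75)\leq\lceil 75/2\rceil$ needs checking in the direction you wrote. In short: your text is a sensible research plan that correctly identifies why the paper's $K_3$-free machinery does not extend, but it proves nothing, and the statement itself remains exactly what the paper says it is --- an open conjecture.
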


To finish this section, we present the following results about the structure of false-twin-free graphs.

\begin{lemma}
Let $G$ be a false-twin-free graph. If $G$ has a $K_3$ as a subgraph then there is no vertex that belongs to all bicliques.
\end{lemma}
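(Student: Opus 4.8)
The plan is to argue by contradiction. Suppose some vertex $v$ lies in every biclique of $G$, and fix a triangle with vertex set $\{a,b,c\}$, which exists by hypothesis. The one structural fact I will use repeatedly is that every edge $xy$ of $G$ lies in at least one biclique: the pair $\{x,y\}$ induces a $K_{1,1}$, and any maximal bipartite complete induced subgraph extending it is a biclique containing both $x$ and $y$; moreover, since $x$ and $y$ are adjacent, they must occupy opposite sides of the bipartition of that biclique. I will also use that a biclique is an \emph{induced} subgraph, so any two vertices lying on the same side are necessarily non-adjacent in $G$.

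First I would dispose of the case $v \in \{a,b,c\}$, say $v=a$. The edge $bc$ lies in some biclique $B$, which by assumption contains $v=a$; in $B$ the adjacent pair $b,c$ sits on opposite sides, so $a$ shares a side with $b$ or with $c$. Being on the same side forces $a$ to be non-adjacent to that vertex, contradicting the fact that $ab$ and $ac$ are edges of the triangle. Hence $v$ cannot be a vertex of the triangle.

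The heart of the argument, and the step I expect to be the main obstacle, is the remaining case $v \notin \{a,b,c\}$. Here I would record, for each triangle vertex, whether $v$ is adjacent to it, writing $\alpha,\beta,\gamma$ for the truth values of $va\in E$, $vb\in E$, $vc\in E$. The edge $ab$ lies in a biclique containing $v$; since $a$ and $b$ occupy opposite sides and $v$ lies on one of those sides, the induced condition forces $v$ to be non-adjacent to the triangle vertex on its own side and adjacent to the one on the other side, so $v$ is adjacent to \emph{exactly} one of $a,b$, that is, $\alpha \neq \beta$. Applying the same reasoning to the bicliques containing $bc$ and $ca$ gives $\beta \neq \gamma$ and $\gamma \neq \alpha$. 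These three inequalities are jointly unsatisfiable over Boolean values, since $\alpha\neq\beta$ and $\beta\neq\gamma$ together force $\alpha=\gamma$, contradicting $\gamma\neq\alpha$. The only delicate point is justifying the ``exactly one neighbour'' claim rigorously from induced-bipartiteness, which is where the care lies.

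I would note in passing that the false-twin-free hypothesis does not seem to be needed anywhere in this argument; it serves only to situate the lemma within the paper's setting, and the conclusion appears to hold for every graph containing a triangle.
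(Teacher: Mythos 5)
Your proof is correct, and there is in fact nothing in the paper to compare it against: this lemma is stated in Section 5 without any proof (as are the other structural observations in that section), so your argument stands on its own and fills a genuine omission. The two facts you lean on are both sound under the paper's definition of biclique (a \emph{maximal bipartite complete induced} subgraph): every edge $xy$ extends to a biclique in which $x$ and $y$ occupy opposite sides, and because the subgraph is induced and complete bipartite, a vertex $v\notin\{x,y\}$ in that biclique is non-adjacent to whichever of $x,y$ shares its side and adjacent to the other. This is exactly the rigorous justification of your ``exactly one neighbour'' step: it gives $\alpha\neq\beta$, $\beta\neq\gamma$, $\gamma\neq\alpha$ from the three edges of the triangle, which is Boolean-unsatisfiable, while the case $v\in\{a,b,c\}$ falls to the same same-side/non-adjacency observation. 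Your closing remark is also accurate: the false-twin-free hypothesis is never used, so the conclusion holds for every graph containing a triangle; in the paper that hypothesis merely reflects the ambient setting of Section 5 rather than a necessity of the statement.
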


\begin{lemma}
Let $G$ be a false-twin-free graph. There are at most two vertices $v,w$ that belong to all bicliques and they must be adjacent. Moreover,
for every other vertex $u$, $u$ is adjacent to $v$ if and only if $u$ is not adjacent to $w$.
\end{lemma}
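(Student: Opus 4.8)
The plan is to exploit one structural principle throughout: every edge of $G$ extends to at least one biclique, and a vertex lying in \emph{all} bicliques must in particular lie in that one. Concretely, suppose $v$ belongs to every biclique and let $xy$ be any edge with $x,y\neq v$. Choose a biclique $B$ containing $x$ and $y$; by hypothesis $v\in B$. Since $B$ is an \emph{induced} bipartite complete subgraph, $x$ and $y$ lie on opposite sides (they are adjacent), while $v$ lies on exactly one side, so $v$ is adjacent to the endpoint on the far side and non-adjacent to the endpoint on its own side. Because $B$ is induced, these relations hold in $G$ itself. This yields the key fact I will use repeatedly: \textbf{if $v$ belongs to all bicliques, then $v$ is adjacent to exactly one endpoint of every edge $xy$ avoiding $v$.} I would also record at the start that a biclique is bipartite, hence triangle-free, and that $G$ is connected so every edge sits in some biclique.

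Next I would show that two such vertices must be adjacent, and that there are at most two. Suppose $v,w$ both lie in every biclique but are non-adjacent. For any $x\in N(v)$ with $x\neq w$, take a biclique $B$ containing the edge $vx$; then $w\in B$, and since $w$ is non-adjacent to $v$ it sits on the same side as $v$, hence opposite $x$, so $w$ is adjacent to $x$. Thus $N(v)\setminus\{w\}\subseteq N(w)$, and as $w\notin N(v)$ this gives $N(v)\subseteq N(w)$; by symmetry $N(v)=N(w)$, making $v,w$ false-twins and contradicting the hypothesis. So any two all-biclique vertices are adjacent. Consequently three of them, say $v,w,z$, would be pairwise adjacent and hence form a $K_3$ contained in every biclique, contradicting that a biclique is triangle-free (equivalently, invoking the preceding lemma, a $K_3$ in $G$ already forbids any vertex from lying in all bicliques). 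Hence at most two vertices lie in all bicliques. This false-twin deduction is the step where the hypothesis is genuinely used, and placing $w$ on the correct side of $B$ is the delicate point.

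Finally, assume $v,w$ both lie in all bicliques (so $vw\in E$) and let $u\notin\{v,w\}$. For \emph{at most one}: if $u\in N(v)$, a biclique $B$ containing $vu$ also contains $w$; as $w$ is adjacent to $v$ it lies opposite $v$, i.e.\ on the same side as $u$, forcing $u\not\sim w$; symmetrically $u\sim w$ forbids $u\sim v$. For \emph{at least one}: since $G$ is connected, $u$ has a neighbor $t$, and a biclique containing $ut$ contains both $v,w$; were $u$ adjacent to neither, both $v,w$ would lie on the side of $u$, hence on a common side, contradicting $vw\in E$. Combining gives $u\sim v$ if and only if $u\not\sim w$. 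The main obstacle throughout is purely the bookkeeping of which side of $B$ each vertex occupies, using repeatedly that $B$ is induced so that same-side means non-adjacent and opposite-side means adjacent; once that translation is set up, each claim falls out from a one-step side argument.
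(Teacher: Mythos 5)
Your proof is correct. Note that the paper states this lemma (like the other structural results of Section~5) without giving any proof, so there is no paper argument to compare against; your write-up actually fills that gap. The three ingredients all check out: (i) every edge extends to a biclique, so an all-biclique vertex is adjacent to exactly one endpoint of every edge avoiding it, by the induced complete bipartite structure (same side means non-adjacent, opposite sides means adjacent); (ii) two non-adjacent all-biclique vertices $v,w$ would satisfy $N(v)=N(w)$ by the side argument, i.e.\ they would be false twins, which is the only place the hypothesis is used; (iii) three all-biclique vertices would be pairwise adjacent and would place a triangle inside a (bipartite, hence triangle-free) biclique, which is impossible as soon as one biclique exists. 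The ``moreover'' part is also handled in both directions, and you correctly use the paper's standing connectivity assumption to guarantee that an arbitrary vertex $u\notin\{v,w\}$ has a neighbor and hence lies in some biclique; the only cases left implicit (e.g.\ that a maximal induced complete bipartite subgraph containing a given edge is indeed a biclique of $G$, and that bicliques in a connected graph on at least two vertices contain an edge) are routine finiteness/maximality observations that do not affect the argument.
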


\begin{lemma}
Let $G$ be a false-twin-free graph. For every biclique $B$ there exists at most one vertex that belongs only to $B$.
\end{lemma}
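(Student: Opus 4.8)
The plan is to argue by contradiction: assume two distinct vertices $u$ and $v$ both belong only to $B$, and derive that either $u$ and $v$ are false-twins or $G$ is the trivial graph $K_2$. Write $B = X \cup Y$ for its bipartition. The first step, which I would use throughout, is a necessary condition: \emph{if a vertex $p \in X$ belongs only to $B$, then $N(p) = Y$}. Indeed, $p$ is adjacent to all of $Y$ and to none of $X$ (as $X$ is independent), so it suffices to rule out a neighbor $w \notin B$. If such a $w$ existed, the edge $pw$ would extend to a maximal complete bipartite subgraph, i.e.\ a biclique $B'$ containing $w$; since $w \notin B$ we would have $B' \neq B$ while $p \in B'$, contradicting that $p$ lies in $B$ only. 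The symmetric statement holds for $p \in Y$.

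With this in hand I split into cases according to the sides of $B$ containing $u$ and $v$. If they lie on the same side, say $u,v \in X$, the necessary condition gives $N(u) = Y = N(v)$, so $u$ and $v$ are false-twins, contradicting the hypothesis. Hence the interesting case is $u \in X$ and $v \in Y$, where $N(u) = Y$ and $N(v) = X$.

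In this case I would further split on the sizes of the parts. The main work is when one part, say $Y$, has a second vertex $v' \neq v$. Since $v' \in Y$ it is adjacent to all of $X$, i.e.\ $N(v') \supseteq X = N(v)$; as $G$ has no false-twins, $N(v') \neq N(v)$, so $v'$ has a neighbor $b \notin X$. Being adjacent to $v' \in Y$ forces $b \notin Y$ (as $Y$ is independent), hence $b \notin B$, and $N(u) = Y$ forces $u \not\sim b$. Then $\{u,b\}$ is independent and both are adjacent to $v'$, so $\{u,b\} \cup \{v'\}$ is an induced $K_{1,2}$ which extends to a biclique $B'$ containing $u$ and $b$; since $b \notin B$ we get $B' \neq B$ with $u \in B'$, contradicting that $u$ belongs only to $B$. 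The case $|X| \geq 2$ is symmetric, producing a second biclique through $v$ instead. The only remaining possibility is $|X| = |Y| = 1$, that is $B = K_{1,1}$ with $N(u) = \{v\}$ and $N(v) = \{u\}$; since $G$ is connected this forces $G = K_2$, the single degenerate graph in which two vertices genuinely lie in only one biclique.

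I expect the opposite-sides case to be the main obstacle, and within it the subtlety is that for general graphs that may contain triangles one cannot freely build bicliques from common neighborhoods, since those need not be independent. The trick that avoids this is never to form a common neighborhood explicitly: it is enough to exhibit a single induced $K_{1,2}$ (here $b\, v'\, u$) witnessing a new biclique, and to invoke false-twin-freeness precisely to guarantee the extra vertex $b$ outside $B$. I would state the final result either under the convention $G \neq K_2$ or, equivalently, record that $K_2$ is the unique exception.
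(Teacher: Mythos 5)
The paper states this lemma in Section~5 without giving any proof, so there is no argument of the authors' to compare yours against; your proposal therefore has to stand on its own, and it does. The key step --- that a vertex $p \in X$ lying only in $B$ must satisfy $N(p) = Y$, since any edge $pw$ with $w \notin B$ extends to a second biclique through $p$ --- is correct, and the case analysis that follows is complete: same-side vertices are immediately false-twins, and for opposite-side vertices the false-twin-freeness applied to a second vertex $v' \in Y$ produces the outside neighbor $b$ and hence the induced $K_{1,2}$ giving a second biclique through $u$. Your observation that one never needs to form common neighborhoods (which could fail to be independent in graphs with triangles) but only a single induced $K_{1,2}$ is exactly the right way to keep the argument valid for general graphs. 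You have also correctly identified a point the paper's statement glosses over: $G = K_2$ is false-twin-free, connected, and both its vertices lie only in its unique biclique, so the lemma as literally stated needs the exception $G \neq K_2$ --- notably, the paper itself adds the hypothesis $G \neq K_2$ to the last lemma of the same section, so your refinement is consistent with the authors' conventions rather than in conflict with them.
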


\begin{lemma}
Let $G$ be a false-twin-free graph. Let $v$ a vertex such that $d(v) \geq 2$. Then $v$ belongs to at least two different bicliques.
\end{lemma}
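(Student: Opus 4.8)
```latex
The plan is to prove that a vertex $v$ with $d(v) \geq 2$ in a false-twin-free graph $G$ belongs to at least two different bicliques.
First I would recall that every edge and hence every vertex with a neighbor is contained in at least one biclique, so the only real content
is producing a \emph{second} biclique through $v$. Let $u_1, u_2 \in N(v)$ be two distinct neighbors of $v$. The natural strategy is to exhibit
two bicliques through $v$ that differ on the ``$u_1$ versus $u_2$'' side. Concretely, I would consider a biclique $B_1$ containing the edge
$vu_1$ and a biclique $B_2$ containing the edge $vu_2$, and argue that these cannot always be forced to coincide.

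The key steps I would carry out are as follows. If $u_1$ and $u_2$ are \emph{not} false-twins (which, since $G$ is false-twin-free, is the
typical situation), then there is a vertex $w$ adjacent to exactly one of them, say $w \in N(u_1) \setminus N(u_2)$ with $w \neq v$; I would
then build one biclique on the $v$-side containing $u_1$ together with the structure involving $w$, and a second biclique that separates $u_2$
from $u_1$, showing the two maximal bicliques are distinct. The cleanest approach, rather than casing on adjacency of $u_1, u_2$, is to
observe that since $N(u_1) \neq N(u_2)$ in $G$, any single biclique $B$ through $v$ cannot simultaneously be maximal and contain both $vu_1$
and $vu_2$ in a way that absorbs all of $N(u_1) \cup N(u_2)$; maximality then forces at least two distinct maximal bicliques as we extend
each edge. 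I would phrase the argument so that extending $vu_1$ to a maximal biclique $B_1$ and $vu_2$ to a maximal biclique $B_2$, the
distinctness follows because $B_1 = B_2$ would entail $u_1$ and $u_2$ lying in the same part with identical neighbor constraints,
contradicting that $G$ has no false-twins.

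I expect the main obstacle to be the case analysis forced by whether $u_1$ and $u_2$ are adjacent and whether $v$ itself lies on the
``center'' side or the ``leaf'' side of each biclique. In particular, one must rule out the degenerate possibility that a single
maximal bipartite complete subgraph contains $v$ on one side and \emph{both} $u_1, u_2$ on the other, which can happen, so the distinctness
of $B_1$ and $B_2$ cannot come from the edges $vu_1, vu_2$ alone but must be extracted from the failure of $N(u_1) = N(u_2)$. The careful
part is choosing the witness vertex $w$ distinguishing $u_1$ from $u_2$ and verifying that the two maximal extensions genuinely land in
different bicliques; this is where the false-twin-free hypothesis does the essential work and must be invoked precisely.
```
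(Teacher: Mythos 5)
Your proposal assembles the right raw materials (the witness $w\in N(u_1)\setminus N(u_2)$, the observation that a single biclique can contain $v$, $u_1$ and $u_2$ simultaneously), but the one deduction you actually commit to is false, and the step that would replace it is precisely the step you defer. You claim that $B_1=B_2$ ``would entail $u_1$ and $u_2$ lying in the same part with identical neighbor constraints, contradicting that $G$ has no false-twins.'' This is a non sequitur: two vertices in the same part of a maximal biclique have identical neighborhoods \emph{inside} the biclique, but nothing constrains their neighborhoods outside it, so they need not be false-twins in $G$. Concretely, take the path on five vertices $a,b,c,d,e$ (in this order), which is false-twin-free: $\{b,c,d\}$ is a maximal biclique with $b$ and $d$ in the same part, yet $N(b)=\{a,c\}\neq\{c,e\}=N(d)$. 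So extending $vu_1$ and $vu_2$ to maximal bicliques can genuinely yield $B_1=B_2$ with no contradiction --- as you yourself concede one sentence later. That concession leaves the proposal with no completed argument: the ``careful part'' you postpone (using $w$ to manufacture a second biclique) is the entire proof.

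Here is the missing step. Suppose $v$ lies in a unique biclique $B$. Every edge $vu$ with $u\in N(v)$ induces a $K_{1,1}$ and hence extends to some biclique containing $v$, which by uniqueness is $B$; therefore $\{v\}\cup N(v)\subseteq B$ with all of $N(v)$ in the part opposite $v$, so $N(v)$ is an independent set (this absorbs your adjacency case analysis: two adjacent neighbors of $v$ would both lie in an independent part, an immediate contradiction). Now take distinct $u_1,u_2\in N(v)$; since $G$ is false-twin-free, without loss of generality there is $w\in N(u_1)\setminus N(u_2)$, and $w\neq v$. Then $w\notin B$: if $w$ were in $v$'s part it would be adjacent to $u_2$, and if it were in the opposite part it would be non-adjacent to $u_1$. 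Moreover $w\notin N(v)$, because $N(v)\subseteq B$. Hence $\{w,u_1,v\}$ induces a $K_{1,2}$, which extends to a biclique $B'$ containing both $v$ and the vertex $w\notin B$, so $B'\neq B$, contradicting uniqueness. Note that the paper states this lemma without proof, so there is no official argument to compare against; judged on its own terms, your text is a plan whose only concrete justification is invalid, not a proof.
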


From last lemma, we obtain this immediate result.

\begin{corollary}
Let $G$ be a false-twin-free graph. There are at least $\lceil \frac{n}{2} \rceil$ vertices that belong to at least two bicliques.
\end{corollary}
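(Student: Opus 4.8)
The plan is to reduce the statement to a counting argument that uses the preceding lemma as a black box. That lemma guarantees that every vertex of degree at least two belongs to at least two distinct bicliques; hence the set of vertices lying in at least two bicliques contains \emph{all} vertices of degree at least two. Consequently it suffices to show that $G$ has at least $\lceil \frac{n}{2} \rceil$ vertices of degree at least two, since each of them will automatically contribute to the desired count.

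First I would dispose of the low-degree vertices. Because the paper assumes all graphs to be connected and $n \geq 3$, there are no isolated vertices, so every vertex has degree at least one, and the only vertices of degree less than two are precisely those of degree one. By Lemma~\ref{grado1}, the false-twin-free hypothesis forces $G$ to have at most $\lfloor \frac{n}{2} \rfloor$ vertices of degree one. Therefore the number of vertices of degree at least two is at least
\[
n - \Big\lfloor \frac{n}{2} \Big\rfloor = \Big\lceil \frac{n}{2} \Big\rceil .
\]

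Combining the two observations finishes the argument: each of these at least $\lceil \frac{n}{2} \rceil$ vertices of degree at least two lies in at least two bicliques by the preceding lemma, so at least $\lceil \frac{n}{2} \rceil$ vertices of $G$ belong to at least two bicliques. There is no genuine obstacle here, since the whole content is packaged into the two lemmas; the only point requiring care is the treatment of vertices of degree at most one. It is exactly the connectivity assumption that rules out isolated vertices, and exactly the false-twin-free assumption (invoked through Lemma~\ref{grado1}) that prevents too many pendant vertices from sharing a common neighbor. One should also keep in mind the hypothesis $n \geq 3$ needed for Lemma~\ref{grado1}: for $n = 2$ the graph $K_2$ is false-twin-free yet has no vertex belonging to two bicliques, so the bound cannot hold in that degenerate case.
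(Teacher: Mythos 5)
Your proposal is correct and is precisely the argument the paper intends when it calls the corollary an ``immediate result'': every vertex of degree at least two lies in two bicliques by the preceding lemma, and Lemma~\ref{grado1} caps the number of degree-one vertices at $\lfloor \frac{n}{2} \rfloor$, leaving at least $\lceil \frac{n}{2} \rceil$ vertices of degree at least two. Your caveat about the degenerate case $n=2$ (where $K_2$ is false-twin-free but has a single biclique) is a fair observation about a hypothesis the paper leaves implicit.
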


\begin{lemma}
Let $G$ be a false-twin-free graph, $G \neq K_2$. If there are two vertices of degree one, they belong to different bicliques.
\end{lemma}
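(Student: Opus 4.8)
The plan is to argue by contradiction, pinning down exactly where a degree-one vertex can sit inside a complete bipartite induced subgraph. Write $u,v$ for the two vertices of degree one, and let $a,b$ be their unique neighbors, so that $N(u)=\{a\}$ and $N(v)=\{b\}$. First I would record that $a\neq b$: if $a=b$ then $N(u)=N(v)$, which makes $u$ and $v$ false-twins, contradicting the hypothesis. This is the only place the false-twin-free assumption is used.

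Next I would fix an arbitrary biclique and write it as a bipartition $B=(X,Y)$, where every vertex of $X$ is adjacent to every vertex of $Y$ and, since $B$ is induced, there are no edges inside $X$ or inside $Y$. The key structural remark is that a degree-one vertex is highly constrained in $B$: if $u\in X$, then every vertex of $Y$ must be a neighbor of $u$, forcing $Y=\{a\}$; symmetrically, if $u\in Y$, then $X=\{a\}$. In words, $u$ can only lie in a biclique whose opposite side is exactly $\{a\}$, and likewise $v$ only in a biclique whose opposite side is $\{b\}$.

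Now assume for contradiction that some biclique $B=(X,Y)$ contains both $u$ and $v$, and split into two cases according to their placement. If $u$ and $v$ lie on the same side, say $u,v\in X$, the remark above forces simultaneously $Y=\{a\}$ and $Y=\{b\}$, hence $a=b$, contradicting $a\neq b$. If instead $u$ and $v$ lie on opposite sides, say $u\in X$ and $v\in Y$, then completeness of $B$ makes $u$ adjacent to $v$; since $u$ has only neighbor $a$ and $v$ only neighbor $b$, this yields $v=a$ and $u=b$, and the remark then forces $X=\{u\}$ and $Y=\{v\}$. Thus $B=\{u,v\}$ is just the edge $uv$, both of whose endpoints have degree one, so $\{u,v\}$ is an entire connected component; since $G$ is connected this gives $G=K_2$, contradicting $G\neq K_2$. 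In either case we reach a contradiction, so no biclique contains both $u$ and $v$.

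I expect the main subtlety to be the degenerate case $B\cong K_2$, where the bipartition $(X,Y)$ is not canonical and one cannot single out a ``center'' side; it is exactly here that connectedness together with the hypothesis $G\neq K_2$ must be invoked, rather than any star-center argument. Everything else follows routinely once the placement remark for degree-one vertices is established.
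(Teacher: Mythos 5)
Your proof is correct. There is nothing in the paper to compare it against: this lemma is one of the unproved structural observations closing Section 5, stated without any argument, so your proposal supplies the missing proof. The route you take is the natural one — a degree-one vertex $u$ with neighbor $a$ forces the opposite side of any biclique containing $u$ to be exactly $\{a\}$, the false-twin-free hypothesis only serves to get $a\neq b$, and connectedness together with $G\neq K_2$ rules out the residual case $B=\{u,v\}$ — and you correctly identify that the global connectedness assumption of the paper is genuinely needed (otherwise a $K_2$ component would be a counterexample). The only point worth flagging is that your step ``$u\in X$ forces $Y=\{a\}$'' implicitly uses that both sides of a biclique are non-empty, i.e., that a biclique contains at least one edge; this is the standard convention and is consistent with the paper's usage (edges and stars as bicliques), but it deserves a word since the paper's literal definition of ``bipartite complete'' ($E=U\times W$) degenerates when one side is empty.
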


\section{Conclusions}

In this paper we study bounds on the minimum number of bicliques in a graph.  Since adding false-twin vertices to a graph does not change 
the number of bicliques, we restrict to false-twins-free graphs. We give a tight lower bound for a subclass of $\{C_4$,false-twin$\}$-free 
graphs and for the class of $\{K_3$,false-twin$\}$-free graphs. Also we discuss the problem for general false-twin-graphs showing that 
this bound does not hold. Finally, we present two conjectures for bounds in general false-twin-free graphs.

\bibliography{biblio}

\end{document}